\documentclass[conference]{ieeeconf}
\IEEEoverridecommandlockouts
\usepackage{cite}
\usepackage{amsmath,amssymb,amsfonts,amsthm}
\usepackage{graphicx}
\usepackage{textcomp}
\usepackage{xcolor}
\usepackage[T1]{fontenc}
\let\labelindent\relax
\usepackage{enumitem}

\newtheorem{definition}{Definition}
\newtheorem{theorem}{Theorem}
\newtheorem{corollary}[theorem]{Corollary}
\newtheorem{lemma}[theorem]{Lemma}

\newcommand{\R}{\mathbb{R}}
\newcommand{\C}{\mathbf{C}}
\newcommand{\pq}{\phi_1}
\newcommand{\pw}{\phi_2}
\newcommand{\oxone}{\overline{X}_1}
\newcommand{\oxtwo}{\overline{X}_2}
\newcommand{\OG}{\overline{G}}
\newcommand{\xai}[1]{\overline{X}_{A, #1}}

\newcommand{\mysqrt}[1]{\left( #1 \right)^\frac{1}{2}}
\newcommand{\lcurl}[1]{\left\{ #1 \right\}}
\newcommand{\suchthat}{\,\ifnum\currentgrouptype=16 \middle\fi|\,}

\definecolor{myorange}{HTML}{FF9300}
\def\orangelozenge{\mathbin{\color{myorange}\blacklozenge}}

\newcommand{\UncertaintyGameNominal}{figs/subfigs/game_nominal_v2.pdf}
\newcommand{\UncertaintyGameAdversary}{figs/subfigs/game_adversary_uncertain.pdf}
\newcommand{\UncertaintyGameAlly}{figs/subfigs/game_ally_uncertain.pdf}
\newcommand{\UncertaintyPayoffNominal}{figs/subfigs/budget_sim_nominal.pdf}
\newcommand{\UncertaintyPayoffAdversary}{figs/subfigs/budget_sim_adversary.pdf}
\newcommand{\UncertaintyPayoffAlly}{figs/subfigs/budget_sim_ally.pdf}

\DeclareMathOperator*{\argmax}{arg\,max}

\def\BibTeX{{\rm B\kern-.05em{\sc i\kern-.025em b}\kern-.08em
    T\kern-.1667em\lower.7ex\hbox{E}\kern-.125emX}}

\title{\LARGE \bf Forming Alliances in the Fog of War: A General Lotto Perspective}

\author{Edik Hakobyan, Vade Shah, and Jason R. Marden
\thanks{This work is supported by}
\thanks{E. Hakobyan (\texttt{edikh7606@gmail.com}), V. Shah, and J. R. Marden are with the Department of Electrical and Computer Engineering at the University of California, Santa Barbara, CA.}
}

\begin{document}

\maketitle 

\begin{abstract}
Agents competing against a common opponent can often gain an edge by forming alliances, but deciding whether to do so is difficult when they lack precise knowledge about their opponent or ally. We study how uncertainty affects opportunities for alliance formation in the coalitional General Lotto game, in which two players compete aganist a shared adversary by allocating their resource budgets over separate sets of valued contests. When all agents know one another's budgets, it is known that a budget transfer from one player to the other can benefit both players. We ask whether such alliances survive when the players know the relevant budgets only up to a multiplicative factor. Under two models, one in which the players are uncertain about the adversary's budget and one in which they are uncertain about each other's budgets, we characterize exactly which games admit a transfer that benefits both players in every game consistent with what they conjecture to be possible. In both models, we show that such robust transfers exist in a set of games of positive measure for every level of uncertainty. These results suggest that alliances can remain worthwhile even when allies are largely in the dark.
\end{abstract}
 
\section{Introduction}

When agents compete against a shared opponent, they can often strengthen their positions by forming an alliance. Companies partner to gain an edge over competing products \cite{elmuti2001overview, culpan2002global, king2003complementary}, energy producers cooperate to thrive in markets \cite{van2015power, de2020cooperatives}, and countries unite against common threats \cite{gibler2008international}. At the heart of each of these partnerships often lies a sharing of resources, be it money, electricity, or military equipment, allowing allies to shore up vulnerabilities, build on one another's strengths, and even deter their opponents.

Choosing whether to form an alliance becomes considerably harder, however, when agents lack precise knowledge of the resources held by their opponents, or even by their partners. Such uncertainty arises in nearly every setting: alliance partners often hold private information about their own capabilities \cite{balakrishnan1993information, reuer2000asymmetric}; energy producers that rely on renewable generation cannot perfectly forecast how much surplus power they or their partners will have available \cite{bitar2012, baeyens2013coalitional}; and countries routinely act on imperfect intelligence regarding the military strength of both their rivals and their allies \cite{morrow1994alliances, smith1995alliance, fearon1995rationalist}. This raises a natural question: \textbf{How much uncertainty can an alliance tolerate before its benefits are no longer guaranteed?}

\begin{figure}
    \includegraphics[width=\linewidth]{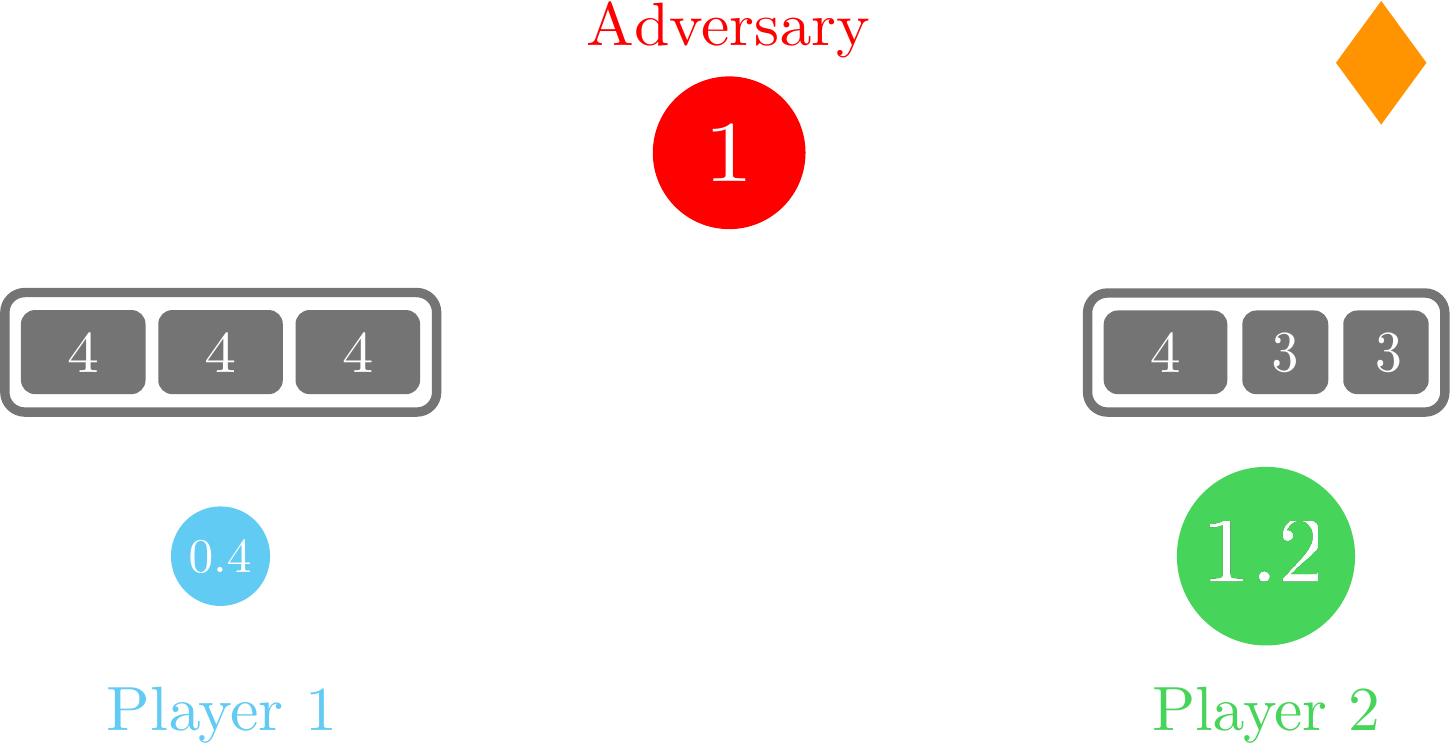}
    \caption{An illustration of a coalitional General Lotto game in which Players 1 and 2 face a shared adversary. The budgets of Player 1, Player 2, and the adversary are $0.4$, $1.2$, and $1$, respectively. Player 1 faces the adversary over the contests on the left, whose total value is $12$, while Player 2 faces the adversary over the contests on the right, whose total value is $10$. We use this game ($\orangelozenge$) as a running example throughout the paper.}
    \centering
    \label{fig:coalitional_basic}
\end{figure}

Game-theoretic tools have been widely used to assess the value of alliances across many adversarial domains \cite{shenoy1979coalition, smith1995alliance, sandler1999alliance, ray2015coalition}. Here, we concentrate on settings of \emph{competitive resource allocation} where agents strategically spread their resources to compete for prizes. A prominent model of competitive resource allocation is the \emph{General Lotto} game \cite{kovenock2021generalizations}, in which wo agents divide limited resources among a collection of valued contests. Whichever agent commits more resources to a contest wins its value, and each agent aims to maximize the total value they win while meeting their budget constraint in expectation.

To study alliance formation, we analyze the \emph{coalitional} General Lotto game, a well-established extension of the standard General Lotto game \cite{kovenock2012coalitional, gupta2014three, heyman2018colonel, chandan2022art, shah2024battlefield, shah2024inefficient, shah2026alliance}. In this game (Figure \ref{fig:coalitional_basic}), two players each face a common adversary in their own General Lotto game. Prior studies of this setting have uncovered a counterintuitive avenue for cooperation: in some games, when one player gives, or \emph{transfers}, part of their budget to the other, each player ends up winning more contest value in their own competition than they would have without the transfer \cite{kovenock2012coalitional}. In other words, even though the donor is left with fewer resources and the recipient with more, \emph{both} come out ahead. This occurs because the transfer prompts the adversary to reallocate their budget, allowing the players to steer the adversary's response in their favor.

Whether such transfers remain worthwhile when the players lack full knowledge of the game, however, is an open question. The mutually beneficial transfers identified in \cite{kovenock2012coalitional} rely on every agent knowing every budget exactly, and a transfer that benefits both players in one game may harm one of them if a budget has been misjudged. Moreover, while the role of information in one-versus-one General Lotto games has received considerable attention \cite{paarporn2021general, paarporn2024incomplete, brethouwer2024general}, far less is known about how uncertainty affects cooperation between allies. This work examines how robust these transfers are to uncertainty. Specifically, we consider two models in which the adversary observes the game exactly, but the players do not: \emph{adversary uncertainty}, in which the players know the adversary's budget only up to a multiplicative factor $1 \pm \delta$, and \emph{ally uncertainty}, in which each player knows the other's budget only up to the same factor. The parameter $\delta$ thus captures how much the allies are in the dark when deciding whether to cooperate. In this framework, we ask when budget transfers can still yield \emph{robust} mutually beneficial alliances, i.e., transfers that raise the equilibrium payoff of each player in every game consistent with what that player knows.

Our first main result shows that for any level of uncertainty about the adversary's budget, a robust mutually beneficial budget transfer exists in a nontrivial subset of games. More precisely, we characterize exactly when an alliance remains mutually beneficial despite uncertainty about the adversary. We then shift to the setting where the players are instead uncertain about one another's budgets. Our second main result shows that for any level of uncertainty about an ally's budget, a robust mutually beneficial budget transfer again exists in a nontrivial subset of games which we characterize exactly. Taken together, these results shed light on both the promise and the limits of forming alliances in uncertain competitive environments.

\section{Model}\label{sec:model}

In this section, we introduce the model of the coalitional General Lotto game in which two Players independently compete against a common Adversary across disjoint sets of contests. For ease of exposition, we begin with a discussion of the classical one-versus-one General Lotto game between a single Player and the Adversary.

\subsection{General Lotto Game}

In the \emph{General Lotto} game, two agents (the Player and the Adversary) compete across a set of $n$ contests. The Player and the Adversary are endowed with arbitrarily divisible resource budgets $X_P \in \R_{>0}$ and $X_A \in \R_{>0}$, respectively, which they allocate across the contests. The agent who allocates a greater level of resources towards the $k^\mathrm{th}$ contest wins its valuation $v^k \in \R_{>0}$ (ties are awarded to the Player). The budgets and the contest valuations are known to both agents, but neither knows how their opponent will allocate their budget across the contests. An allocation decision for the Player is a vector $D_P \in \R_{\geq 0}^n$, where $D_P^k$ is the amount of budget that they allocate to contest $k$; the decision for the Adversary $D_A$ is defined similarly. The Player's payoff for a given pair $(D_P, D_A)$ is of the form
\begin{equation*}
    U_P^{\rm GL}(D_P, D_A) = \sum_{k=1}^n v^k \cdot I \{D_{P}^k \geq D_{A}^k\},
\end{equation*}
where $I\{\cdot\}$ is the usual indicator function, and the Adversary's payoff is $U_{A}^{\rm GL}(D_{P}, D_{A}) = \sum_{k=1}^n v^k - U_{P}^{\rm GL}(D_P, D_A)$.

A \emph{strategy} for the Player is an $n$-variate distribution function $F_P$ satisfying
\begin{equation}
    \mathbb{E}_{D_P \sim F_P}\left[\sum_{k = 1}^n D_P^k \right] \leq X_P,
\end{equation}
meaning that the Player meets their budget constraint \emph{in expectation}. The Adversary's strategy $F_A$ is defined similarly.
The agents' Nash equilibrium strategies and payoffs in this setting are well-understood and have been characterized in \cite{kovenock2021generalizations}. For ease of exposition, we omit the equilibrium strategies and instead present only the equilibrium payoffs in the following Theorem from \cite{kovenock2021generalizations}:
\begin{theorem}[Kovenock and Roberson, 2021]
    For any General Lotto game with budgets $X_P$, $X_A$ and contest valuations $v^1, \dots, v^n$, a Nash equilibrium exists, and the unique Nash equilibrium payoffs for the Player and the Adversary are
    \begin{align}
        U_P^{\rm NE}(\phi, X_P, X_A) &\triangleq \begin{cases}
            \phi \left( \frac{X_P}{2 X_A} \right) & X_P \leq X_A \\
            \phi \left( 1 - \frac{X_A}{2 X_P} \right) & X_P > X_A \label{eq:player_payoff}
        \end{cases} \\
        \text{and } \; U_A^{\rm NE}(\phi, X_P, X_A) &\triangleq \phi - U_P^{\rm NE}(\phi, X_P, X_A), \label{eq:adversary_payoff}
    \end{align}
    respectively, where $\phi \triangleq \sum_{k = 1}^n v^k$.
\end{theorem}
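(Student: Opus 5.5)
The plan is the standard two-step route for General Lotto: first reduce the $n$-contest game to a single-contest problem on marginal distributions, then exhibit explicit equilibrium marginals and verify them with a Lagrangian-style argument. Since only the expected total allocation is constrained, each agent effectively chooses $n$ univariate marginals $F_P^k$ on $\R_{\geq 0}$ whose means sum to at most $X_P$. The expected payoff $\sum_k v^k \Pr[D_P^k \geq D_A^k]$ depends only on these marginals. Any choice of marginals with feasible means can be realized by some joint distribution, for instance the product distribution. So I will work entirely with marginals.

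\emph{Step 1 (candidate equilibrium).} Assume first $X_P \leq X_A$. I will allocate budgets proportionally to value: the Adversary uses mean $X_A v^k/\phi$ on contest $k$, and the Player uses $X_P v^k/\phi$. The Adversary's marginal on contest $k$ is uniform on $[0, 2X_A v^k/\phi]$. The Player's marginal on contest $k$ puts mass $1 - X_P/X_A$ at $0$ and spreads the remaining mass uniformly on $[0, 2X_A v^k/\phi]$. The Player then wins contest $k$ with probability $\tfrac{X_P}{X_A}\cdot\tfrac12$. Summing gives $\phi\,\frac{X_P}{2X_A}$. The case $X_P > X_A$ is symmetric, with the roles swapped. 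In that case the Adversary wins each contest with probability $\frac{X_A}{2X_P}$, which yields $\phi(1 - \frac{X_A}{2X_P})$. Ties have probability zero here, because at least one marginal is atomless on the relevant support, so the tie rule does not matter.

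\emph{Step 2 (best-response verification).} This is the main step. Take the Adversary's marginals fixed, uniform on $[0, c_k]$ with $c_k = 2X_A v^k/\phi$. Any Player allocation $x$ to contest $k$ then wins $v^k\min\{x/c_k, 1\} \leq \frac{\phi}{2X_A} x$. Hence any Player strategy with total expected allocation at most $X_P$ earns at most $\frac{\phi}{2X_A} X_P$, so the candidate is a best response. Symmetrically, fix the Player's marginal $F_P^k$. An Adversary allocation $y>0$ wins with probability $\Pr[D_P^k < y] = (1 - X_P/X_A) + \frac{X_P}{X_A}\cdot\frac{y}{c_k}$, capped at $1$. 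The Adversary's payoff is therefore bounded by a constant plus a linear function of its expected total allocation, with slope $\frac{X_P}{X_A}\cdot\frac{\phi}{2X_A}$. The bound is attained by any strategy supported in $[0, c_k]$ that spends the full budget. This includes the candidate, and it gives the Adversary $\phi - \phi\frac{X_P}{2X_A}$. The delicate points are two: the value $y=0$, where ties go to the Player so the atom at zero never loses to it, and the caps. I will handle these by noting that the linear bounds dominate the true payoff functions and agree with them on the support.

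\emph{Step 3 (uniqueness of payoffs).} The game is constant-sum: payoffs sum to $\phi$. Every Nash equilibrium of a constant-sum game therefore yields the value, so the equilibrium payoffs are unique. Together with Step 2, this establishes \eqref{eq:player_payoff} and \eqref{eq:adversary_payoff}. The main obstacle I anticipate is the careful treatment of the tie-breaking rule and the atom at zero in the Adversary's best-response bound. That is where the weak versus strict inequality matters, so I will bound $\Pr[D_P^k < y]$ rather than $\le$ and check that it remains consistent.
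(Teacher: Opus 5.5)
Your proof is correct. The paper gives no proof of this statement: it imports it from Kovenock and Roberson (2021). The comparison is therefore between a citation and your self-contained verification.

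Your route has four parts:
\begin{enumerate}
\item You reduce to marginals. This is legitimate because the agents randomize independently, and both the payoff and the expected-budget constraint depend only on the marginals.
\item You exhibit uniform and atom-at-zero marginals, with budgets split proportionally to value.
\item You certify best responses through the linear bounds $\min\{x/c_k,1\}\le x/c_k$.
\item You obtain payoff uniqueness from the interchangeability of equilibria in constant-sum games.
\end{enumerate}
This is the standard constructive argument for General Lotto. It makes explicit that payoff uniqueness needs no minimax theorem. The citation additionally supplies the characterization of the equilibrium strategies themselves, which the paper omits and your argument does not attempt.

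Two small points need tidying.

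First, take $X_P<X_A$. The Adversary's linear bound is attained only by strategies with no atom at $0$. At $y=0$ the Adversary wins with probability $0$, which is strictly below the bound $1-X_P/X_A$, because ties go to the Player. So ``supported in $[0,c_k]$'' should read ``supported in $(0,c_k]$''. Likewise, ``agree on the support'' should be ``agree almost surely under the candidate.'' Your uniform candidate satisfies this, so the conclusion stands.

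Second, the case $X_P>X_A$ is not literally symmetric, because the tie rule is not symmetric. Redoing the check with $d_k=2X_Pv^k/\phi$ nonetheless works:
\begin{itemize}
\item Against the Adversary's atom at zero, $\Pr[D_A^k\le x]=(1-X_A/X_P)+(X_A/X_P)\min\{x/d_k,1\}$ for all $x\ge 0$. The tie at $x=0$ now favors the Player, so the linear bound is tight there as well.
\item For the Adversary, $\Pr[D_P^k<y]=\min\{y/d_k,1\}\le y/d_k$, with equality at $y=0$.
\end{itemize}
Both candidates therefore attain their bounds, and the payoff is $\phi\bigl(1-\frac{X_A}{2X_P}\bigr)$ as claimed.
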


\subsection{Coalitional General Lotto Game}

To study opportunities for cooperation, we analyze the \emph{coalitional General Lotto game}, a variant in which Players 1 and 2 compete against a common Adversary in separate standard General Lotto games. When convenient, we refer to one Player as $i \in \{1, 2\}$ and the other as $-i$. Our focus is on the formation of alliances between the Players, where an alliance is formed through the unilateral transfer of resources from one Player to the other. The interaction unfolds over multiple stages, as illustrated in Figure \ref{fig:stages} and described in detail below.

\begin{figure*}
    \includegraphics[width=\textwidth]{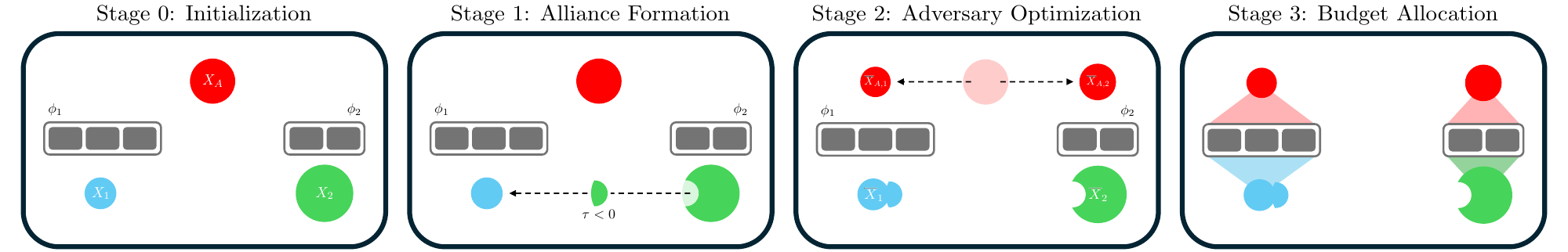}
    \caption{The stages of the standard coalitional General Lotto game. In Stage 0, the game is initialized. In Stage 1, the Players perform a transfer of resources. In Stage 2, the Adversary best responds to any transfers. In Stage 3, the agents allocate their resources toward the contests according to their Nash equilibrium strategies and receive their equilibrium payoffs.}
    \label{fig:stages}
\end{figure*}

\vspace{.2cm}
\noindent\textbf{Stage 0 (Initialization):} The game is initialized. Player 1, Player 2, and the Adversary have budgets $X_1$, $X_2$, and $X_A$, respectively. Player $i \in \{1, 2\}$ competes against the Adversary in General Lotto game $i$, where the $n_i$ contests have cumulative valuation $\phi_i$. A coalitional General Lotto game instance is thus fully parameterized by the tuple
\begin{equation}
    G = (\pq, \pw, X_1, X_2, X_A) \in \mathcal{G} = \R_{>0}^5
\end{equation}
which all agents observe before Stage 1.

\vspace{.2cm}
\noindent\textbf{Stage 1 (Alliance Formation):} The Players form an alliance through a budget transfer. Player 1 transfers 
\begin{equation}
    \tau \in (-X_2, X_1)
\end{equation}
units of their budget to Player 2; a negative value of $\tau$ indicates that the net amount is transferred from Player 2 to Player 1. The transfer yields new budgets
\begin{equation}
    \oxone = X_1 - \tau \; \text{ and } \; \oxtwo = X_2 + \tau
\end{equation}
for Players 1 and 2, respectively.

\vspace{.2cm}
\noindent\textbf{Stage 2 (Adversary Optimization):} The Adversary best responds to any transfers between the Players by optimally dividing their budget between the two General Lotto games. Mathematically, this entails solving the problem
\begin{equation}\label{eq:adversary_optimization}
    \argmax_{\underset{X_{A,1} + X_{A,2} \leq X_A}{X_{A,1}, X_{A,2} \geq 0}} U_A^{\rm NE}(\phi_1, \oxone, X_{A,1}) + U_A^{\rm NE}(\phi_2, \oxtwo, X_{A,2}),
\end{equation}
where $X_{A, i}$ is the amount of budget they send to Lotto game $i$.
We denote the solution pair to problem \eqref{eq:adversary_optimization} as $(\xai{1}(G, \tau), \xai{2}(G, \tau))$, which has been derived in \cite{kovenock2012coalitional} and summarized in Appendix \ref{app:cases}\footnote{The solution pair is unique whenever $\oxone / \phi_1 \neq \oxtwo / \phi_2$, i.e., for every transfer except $\tau = (\phi_2 X_1 - \phi_1 X_2)/(\phi_1 + \phi_2)$. At this single transfer, the Adversary may be indifferent among multiple allocations that yield different payoffs for the Players; since this constitutes a measure-zero set of transfers, we take $(\xai{1}(G, \tau), \xai{2}(G, \tau))$ to be a fixed value from the set of solutions for all $(G, \tau)$.}.

\vspace{.2cm}
\noindent\textbf{Stage 3 (Budget Allocation):} In the third and final stage, the Players and the Adversary allocate their budgets towards the contests according to their equilibrium General Lotto strategies and receive their corresponding equilibrium payoffs \eqref{eq:player_payoff}, \eqref{eq:adversary_payoff}. To highlight the dependence on the transfer amount, we rewrite Player 1 and 2's utility functions as
\begin{align}
    U_1(G, \tau) &\triangleq U_P^{\rm NE}(\phi_1, X_1 - \tau, \overline{X}_{A, 1}(G, \tau)) \; \text{ and } \\
    U_2(G, \tau) &\triangleq U_P^{\rm NE}(\phi_2, X_2 + \tau, \overline{X}_{A, 2}(G, \tau)),
\end{align}
respectively.

In this work, we are interested in budget transfers that strictly increase both Players' payoffs:

\begin{definition}
    For a given coalitional General Lotto game $G \in \mathcal{G}$, a transfer $\tau \in (-X_2, X_1)$ is \emph{mutually beneficial} if
    \begin{equation}
        U_1(G, \tau) > U_1(G, 0) \; \text{ and } \; U_2(G, \tau) > U_2(G, 0).
    \end{equation}
\end{definition}

Previous work has demonstrated the counterintuitive result that mutually beneficial transfers exist in a nontrivial subset of games \cite{kovenock2012coalitional}. For ease of presentation, we focus throughout on games satisfying
\begin{equation}\label{eq:ratio-assumption}
    \frac{X_2}{\phi_2} \geq \frac{X_1}{\phi_1},
\end{equation}
i.e., the ratio of Player 2's budget to contest valuation is at least that of Player 1; all of our results extend to games with $X_2 / \phi_2 < X_1 / \phi_1$ by swapping indices. For such games, the following result combines Propositions 1 and 2 of \cite{kovenock2012coalitional}; its two cases correspond to Cases 2 and 3 in Appendix~\ref{app:cases}.

\begin{lemma}[Kovenock and Roberson, 2012]\label{lem:nominal}
    Consider any game $G \in \mathcal{G}$ that satisfies $\frac{X_2}{\phi_2} \geq \frac{X_1}{\phi_1}$.
    \begin{enumerate}[label=(\roman*)]
        \item If $\mysqrt{\frac{\phi_1 X_1 X_2}{\phi_2}} < X_A \leq X_2 + \mysqrt{\frac{\phi_1 X_1 X_2}{\phi_2}}$, then $G$ has a mutually beneficial transfer if and only if
        \begin{equation}\label{eq:nominal-case-2}
            \frac{X_1 + X_2}{X_A} > 2 \mysqrt{\frac{\phi_2 X_1}{\phi_1 X_2}}.
        \end{equation}
        \item If $X_A > X_2 + \mysqrt{\frac{\phi_1 X_1 X_2}{\phi_2}}$, then $G$ has a mutually beneficial transfer if and only if
        \begin{equation}\label{eq:nominal-case-3}
            1 - \frac{X_1}{X_2} > 2 \mysqrt{\frac{\phi_2 X_1}{\phi_1 X_2}}.
        \end{equation}
    \end{enumerate}
    Further, no transfer $\tau > 0$ is mutually beneficial, i.e., any mutually beneficial transfer flows from Player 2 to Player 1.
\end{lemma}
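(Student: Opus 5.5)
The plan is to make the Adversary's best response $(\xai{1}(G,\tau),\xai{2}(G,\tau))$ from Appendix~\ref{app:cases} explicit as a function of $\tau$. Then I would write $U_1(G,\cdot)$ and $U_2(G,\cdot)$ as explicit piecewise functions and compare them with their values at $\tau=0$.

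The basic observation concerns the Adversary's payoff in game $i$ as a function of $X_{A,i}$. It is linear with slope $\phi_i/(2\overline{X}_i)$ while $X_{A,i}<\overline{X}_i$, and concave, equal to $\phi_i(1-\overline{X}_i/(2X_{A,i}))$, once $X_{A,i}\ge\overline{X}_i$. The Adversary therefore equalizes marginal values across the two games. This produces the three regimes of the appendix:
\begin{itemize}[label={}]
\item In the small-budget regime, the Adversary spends only in the game with the smaller ratio $\overline{X}_i/\phi_i$.
\item In the intermediate regime (Case 2), it saturates the weaker Player's game, meaning it reaches the concave part there, and puts the remainder linearly into the other game.
\item In the large-budget regime (Case 3), it is in the concave part of both games and sets $X_{A,i}=X_A\sqrt{\phi_i\overline{X}_i}/S$, where $S=\sqrt{\phi_1\overline{X}_1}+\sqrt{\phi_2\overline{X}_2}$.
\end{itemize}
I would check that under \eqref{eq:ratio-assumption} and $\tau=0$, the hypotheses of (i) and (ii) are exactly the thresholds separating these regimes, namely $X_A$ versus $\sqrt{\phi_1X_1X_2/\phi_2}$ and $X_2+\sqrt{\phi_1X_1X_2/\phi_2}$.

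Next I would dispose of $\tau>0$. A positive transfer lowers $\overline{X}_1/\phi_1$ and raises $\overline{X}_2/\phi_2$. In every regime the Adversary's marginal value in game 1 then weakly increases, so $\xai{1}$ weakly increases while $\overline{X}_1$ strictly decreases. Since $U_P^{\rm NE}$ is increasing in the Player's budget and decreasing in the Adversary's, $U_1(G,\tau)<U_1(G,0)$ for all $\tau>0$, so no positive transfer is mutually beneficial.

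For $\tau<0$, I would work case by case.

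In case (ii) I would stay in the Case 3 regime for small $|\tau|$ and write $U_i=\sqrt{\phi_i\overline{X}_i}\,S/(2X_A)$. Then $U_1>U_1(G,0)$ is automatic, because both factors grow in $|\tau|$ in the right direction, or at least their product does. The condition $U_2(G,\tau)>U_2(G,0)$ becomes
\[
\sqrt{\phi_2(X_2+\tau)}\bigl(\sqrt{\phi_1(X_1-\tau)}+\sqrt{\phi_2(X_2+\tau)}\bigr)>\sqrt{\phi_2X_2}\,S_0 ,
\]
where $S_0$ is $S$ at $\tau=0$. I would also have to track that large transfers leave this regime, either by crossing $\overline{X}_1/\phi_1=\overline{X}_2/\phi_2$, after which the roles swap, or by moving into Case 2. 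The claim is that the feasible set of $\tau$ is nonempty exactly when \eqref{eq:nominal-case-3} holds. To see this I would substitute $t=-\tau$, square, and find the supremum over admissible $t$. I expect that supremum to be approached at the regime boundary, which yields the inequality $1-X_1/X_2>2\sqrt{\phi_2X_1/(\phi_1X_2)}$.

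Case (i) is analogous. There Player 1's game is saturated, $U_1=\phi_1\overline{X}_1/(2\xai{1})$, and Player 2 faces the linear leftover, giving $U_2=\phi_2\bigl(1-(X_A-\xai{1})/(2\overline{X}_2)\bigr)$. Here $\xai{1}$ is determined by the marginal equality $\phi_1\overline{X}_1/(2X_{A,1}^2)=\phi_2/(2\overline{X}_2)$, that is, $\xai{1}=\sqrt{\phi_1\overline{X}_1\overline{X}_2/\phi_2}$. Substituting this, the two inequalities become explicit in $\tau$, and their joint feasibility should reduce to $(X_1+X_2)/X_A>2\sqrt{\phi_2X_1/(\phi_1X_2)}$.

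I expect the main obstacle to be the bookkeeping in the ``only if'' directions. There I must show that no $\tau<0$ works when the inequality fails, and this includes transfers large enough to change the Adversary's regime or to reverse the ordering of the ratios. For those regions I plan to argue directly that Player 2's payoff then falls below $U_2(G,0)$. The argument uses monotonicity of each branch in $\tau$ and continuity of $U_2$ across regime boundaries, which reduces everything to checking the boundary values.
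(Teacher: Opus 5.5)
\textbf{The $\tau<0$ analysis.} You expect \eqref{eq:nominal-case-3} to come from a supremum reached at a regime boundary. In fact both \eqref{eq:nominal-case-2} and \eqref{eq:nominal-case-3} are first-order conditions at zero transfer. With $s=-\tau$, they say exactly that Player 2's marginal payoff $D_2$, resp.\ $D_3$, from Appendix~\ref{app:nominal} is positive at $s=0$; this is how the paper, which imports the lemma from Kovenock and Roberson, reads them. No boundary evaluation can produce them. Condition \eqref{eq:nominal-case-3} does not involve $X_A$, whereas the Case 2/3 boundary does. At the ratio-equalizing transfer in case (ii) one gets $U_2=\phi_2(X_1+X_2)/(2X_A)$, which is strictly below $U_2(G,0)$ whenever \eqref{eq:ratio-assumption} is strict.

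The ``if'' direction is then just a positive slope at $s=0$ plus Player 1's automatic gain. For ``only if'', ``monotonicity of each branch'' is not available as stated, since $U_2$ can rise and then fall within one branch. Checking boundary values also cannot exclude an interior rise on a later branch. The missing idea is concavity with sign propagation:
\begin{itemize}
\item In Case 3, $U_2=\bigl(\phi_2\oxtwo+\sqrt{\phi_1\phi_2\oxone\oxtwo}\bigr)/(2X_A)$ is concave in $s$.
\item In Case 2, a direct computation using $X_A>\sqrt{\phi_1\oxone\oxtwo/\phi_2}$ gives $d^2U_2/ds^2<0$.
\item On the Case 2/3 boundary, $D_2=(X_A/\oxtwo)\,D_3$, so the slope keeps its sign across the kink.
\end{itemize}
Hence a nonpositive slope at $s=0$ makes $U_2$ nonincreasing up to the ratio-equalizing transfer. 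In particular, Case 1, where $U_2=\phi_2$, is never reached. Beyond that point Player 2 is the lower-ratio donor and only loses.

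\textbf{The $\tau>0$ step.} Your claim that $\xai{1}$ weakly increases is false. On the concave part of game 1, a smaller $\oxone$ \emph{lowers} the Adversary's marginal value $\phi_1\oxone/(2X_{A,1}^2)$. In Case 3, $\xai{1}=X_A\sqrt{\phi_1\oxone}/S$ strictly decreases in $\tau$. In Case 2, $\xai{1}=\sqrt{\phi_1\oxone\oxtwo/\phi_2}$ decreases in the running example of Figure~\ref{fig:coalitional_basic}, because $(0.4-\tau)(1.2+\tau)$ does. With $\oxone$ and $\xai{1}$ both falling, monotonicity of $U_P^{\rm NE}$ settles nothing.

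The conclusion is still true, but it needs the explicit payoffs together with continuity across Cases:
\begin{itemize}
\item In Case 2, $U_1=\tfrac12\sqrt{\phi_1\phi_2\oxone/\oxtwo}$.
\item In Case 3, $U_1=\sqrt{\phi_1\oxone}\,S/(2X_A)$, with $dS/d\tau=\tfrac12\bigl(\sqrt{\phi_2/\oxtwo}-\sqrt{\phi_1/\oxone}\bigr)\le0$ under the ratio ordering.
\end{itemize}
This is the computation in Lemma~\ref{lem:recipient} run in reverse. The same formula for $dS/d\tau$ is what justifies your hedged claim that $S$ grows when $\tau<0$.
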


\section{Results}

The focus of this work is on understanding if Players can still form mutually beneficial alliances through budget transfers when they are \emph{uncertain} about the parameters of the game. In this section, we consider two uncertainty models and analyze whether mutually beneficial transfers persist under these models.

Throughout, we consider games satisfying \eqref{eq:ratio-assumption}, for which any mutually beneficial transfer flows from Player 2 to Player 1 (i.e., $\tau < 0$) when there is no uncertainty (Lemma~\ref{lem:nominal}). Further, for notational convenience, given a game $G$ and parameter $P$, we write $G \mid_{P \mapsto P'}$ for the game that is identical to $G$ except that $P$ is replaced by $P'$.

\begin{figure*}
    \centering
    \begin{tabular}{@{}c@{\hspace{0.04\linewidth}}c@{\hspace{0.04\linewidth}}c@{}}
        \includegraphics[width=0.30\linewidth]{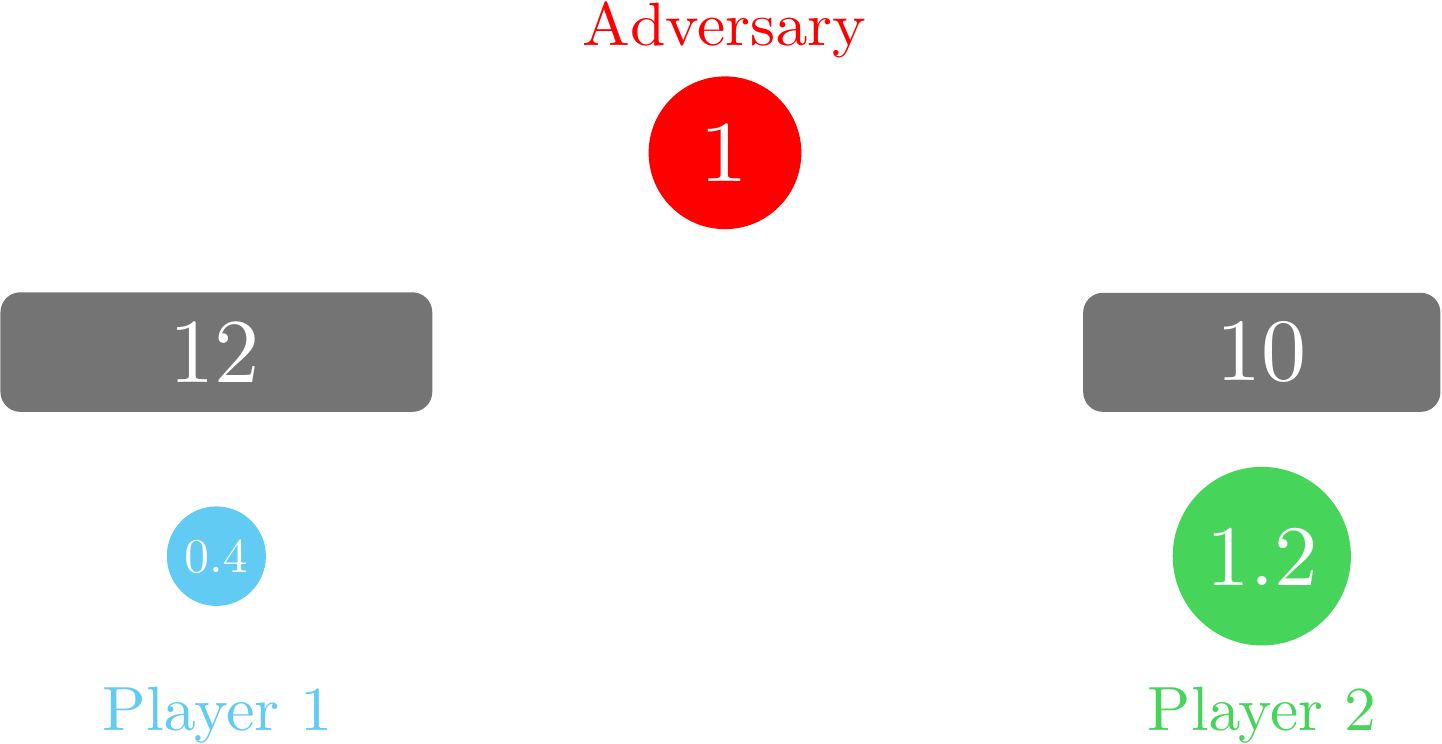} &
        \includegraphics[width=0.30\linewidth]{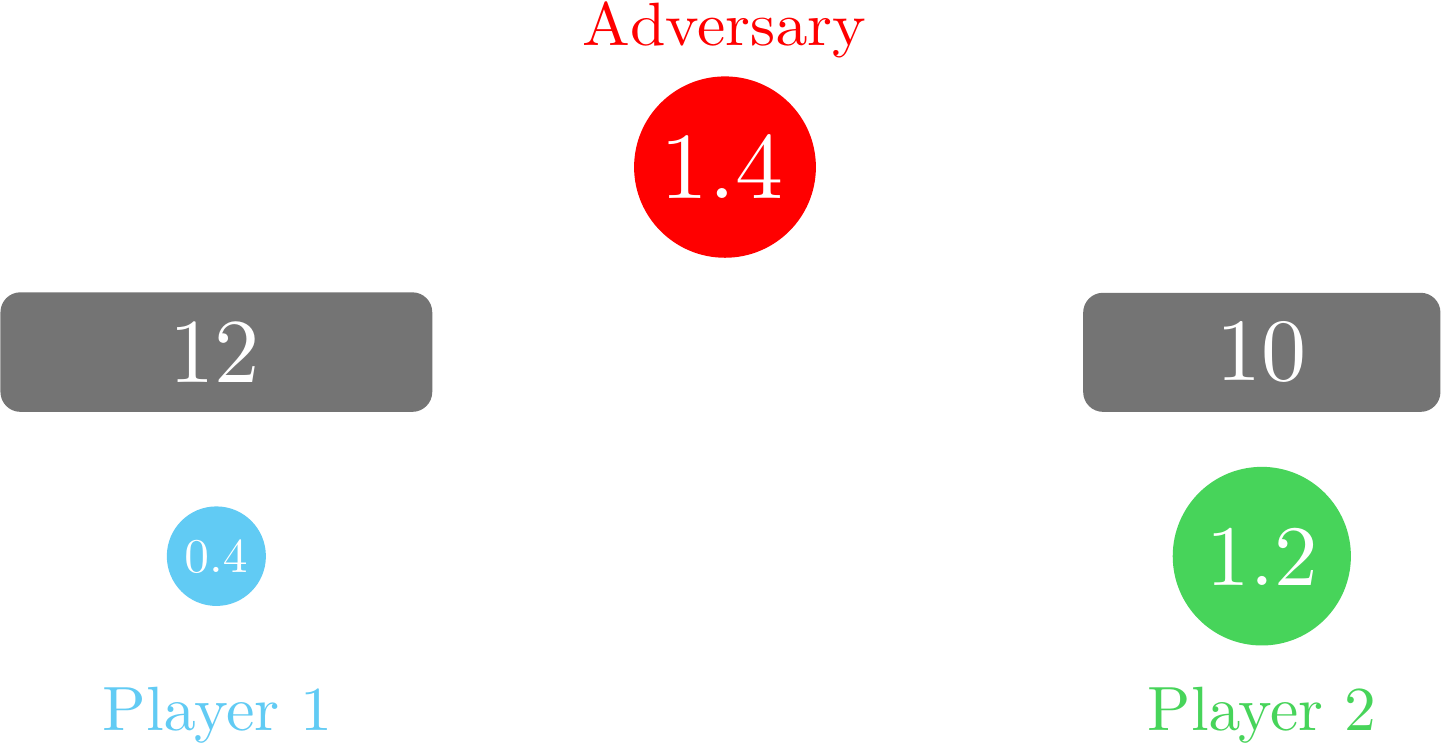} &
        \includegraphics[width=0.30\linewidth]{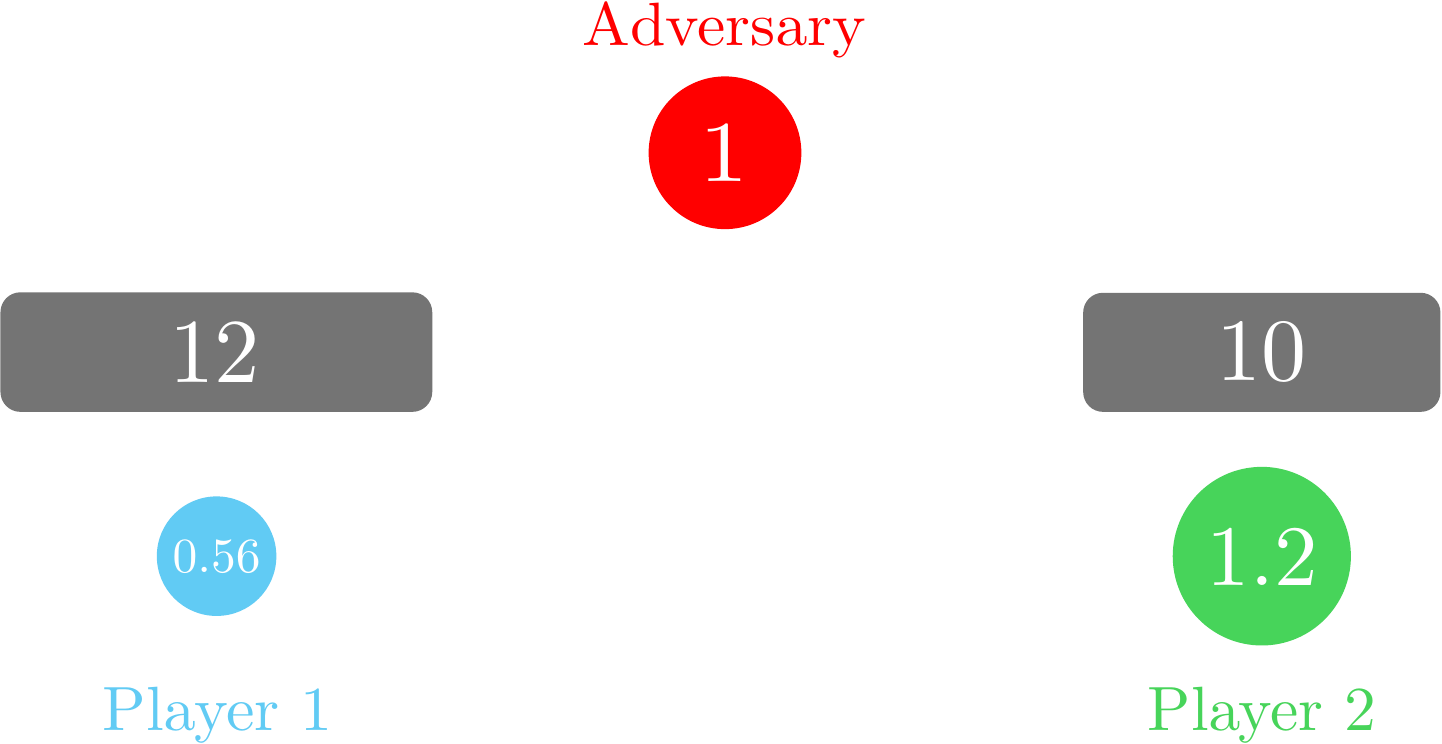} \\[0.4em]
        \includegraphics[width=0.30\linewidth]{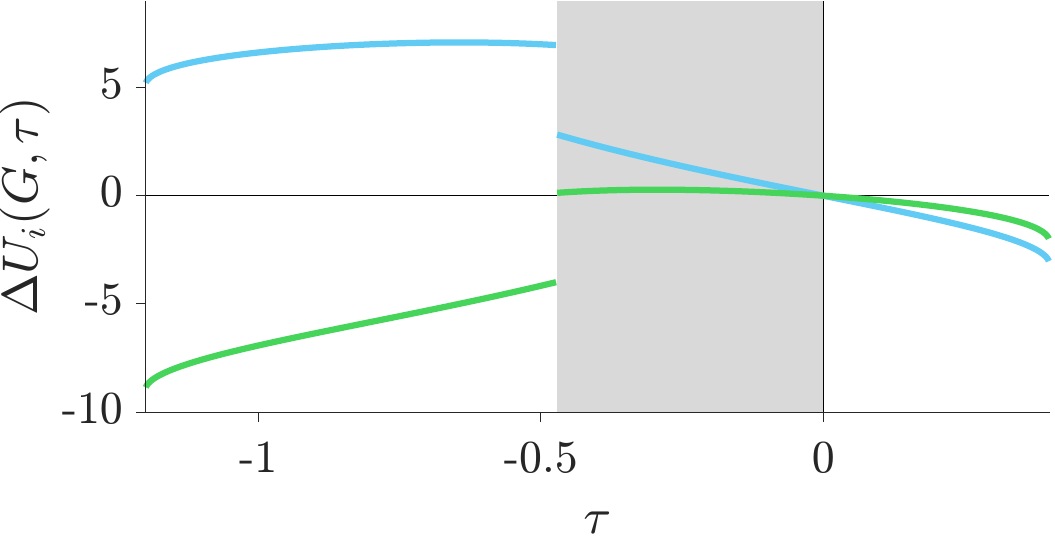} &
        \includegraphics[width=0.30\linewidth]{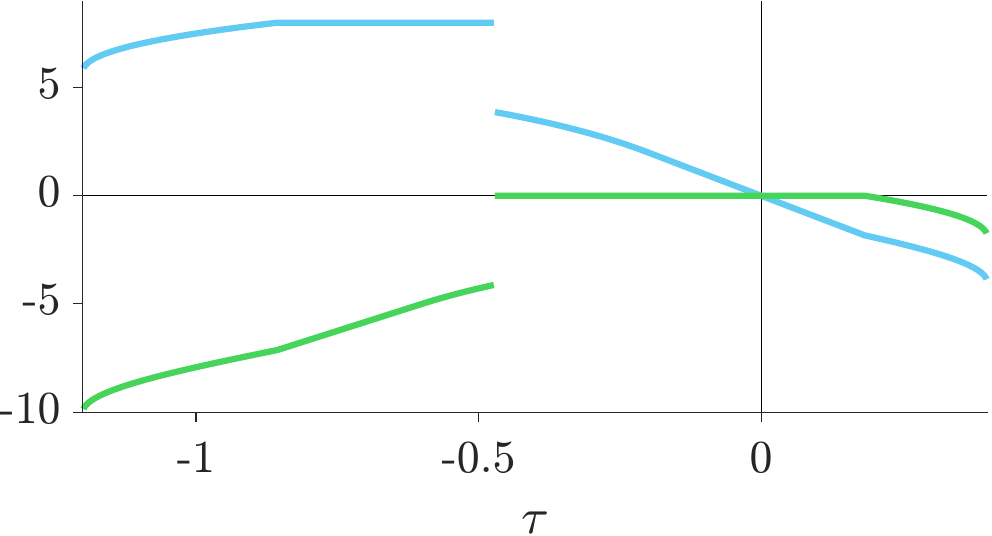} &
        \includegraphics[width=0.30\linewidth]{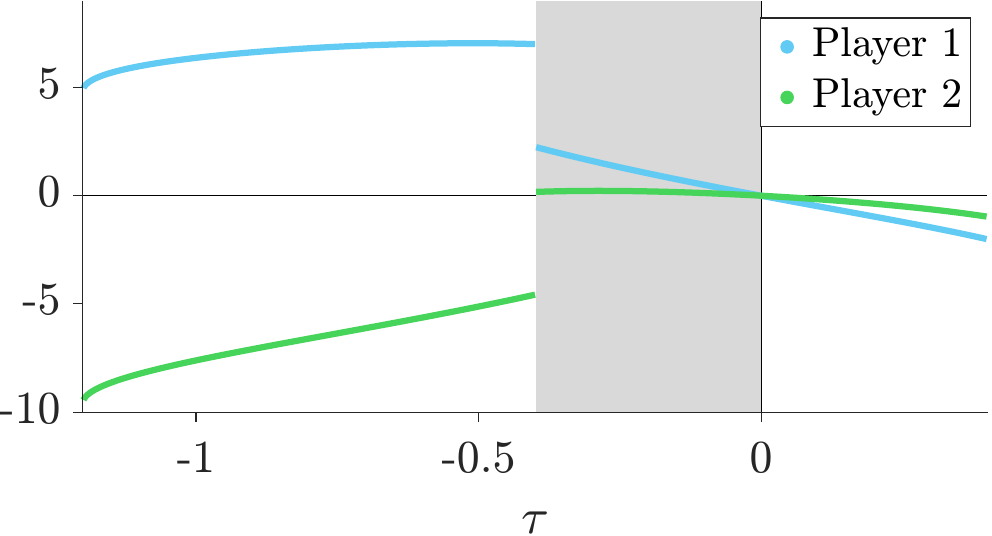}
    \end{tabular}
    \caption{Player payoffs as a function of the transfer $\tau$ in a nominal game $G = (\phi_1, \phi_2, X_1, X_2, X_A) = (12, 10, 0.4, 1.2, 1)$ (left), a perturbed game that differs from $G$ only in the Adversary's budget, $X_A' = X_A(1-\delta) = 0.6$ (middle), and a perturbed game that differs from $G$ only in Player~1's budget, $X_1' = X_1(1+\delta) = 0.56$ (right), with $\delta = 0.4$. Top row: game depictions. Bottom row: each Player's change in payoff $\Delta U_i(G, \tau) \triangleq U_i(G, \tau) - U_i(G, 0)$ in the depicted game; shaded regions mark mutually beneficial transfers. The middle game lies in ${\cal G}^{X_A}(G, 0.4)$ and admits no mutually beneficial transfer, so $G$ has no $0.4$-robust mutually beneficial transfer under Adversary uncertainty. The right panel shows that mutual benefit can persist under ally uncertainty; in fact $G$ is $0.4$-robust to ally uncertainty, though verifying this requires the full set ${\cal G}^{X_{-i}}(G, 0.4)$, not just the single realization depicted.}
    \label{fig:uncertainty-example}
\end{figure*}

\subsection{Adversary Uncertainty}

First, we consider the setting where both Players are uncertain about the Adversary's budget. In contrast to the model in Section~\ref{sec:model}, in which all agents observe $G$ before Stage 1, the Adversary continues to observe $G$ exactly, but the Players now observe only some of its parameters. In particular, the Players know only that the game belongs to the set ${\cal G}^{X_A}(G, \delta)$ given by
\begin{equation}
    {\cal G}^{X_A}(G, \delta) \triangleq \left\{G |_{X_A \mapsto X_A'} : 1 - \delta \leq \frac{X_A'}{X_A} \leq 1 + \delta \right\}.
\end{equation}
Here, $\delta \in [0, 1)$ parameterizes the Players' uncertainty about the Adversary's budget; when $\delta = 0$, this recovers the nominal case with no uncertainty (${\cal G}^{X_A}(G, 0) = \{G\}$). Our goal is to understand whether there exist mutually beneficial transfers that are \emph{robust} to this uncertainty:

\begin{definition}
    Given a game $G \in \mathcal{G}$, we will say that a mutually beneficial transfer $\tau$ is $\delta$-robust to Adversary uncertainty if for each Player $i \in \{1,2\}$
    \begin{equation}
        U_i(G_i, \tau) > U_i(G_i, 0)
    \end{equation}
    for all games $G_i \in {\cal G}^{X_A}(G, \delta)$.
\end{definition}
\noindent Figure~\ref{fig:uncertainty-example} shows an example of a game that has a mutually beneficial transfer, but does not have a transfer that is $0.4$-robust to Adversary uncertainty.

Our first main result is a complete characterization of the set of games that admit robust mutually beneficial transfers.

\begin{theorem}\label{thm:adversary-uncertainty}
    Consider any game $G \in \mathcal{G}$ that has a mutually beneficial transfer and satisfies $\frac{X_2}{\phi_2} \geq \frac{X_1}{\phi_1}$. Then, $G$ also has a transfer that is $\delta$-robust to Adversary uncertainty if and only if
    \begin{equation}\label{eq:adversary-uncertainty-cond-1}
        \delta < \begin{cases}
            \alpha_1(G) & X_2 - X_1> 2 \mysqrt{\frac{\phi_2 X_1 X_2}{\phi_1}}, \\
            \min \{\alpha_1(G), \alpha_2(G)\} & \text{otherwise,}
        \end{cases}
    \end{equation}
    where
    \begin{align}
        \alpha_1(G) &\triangleq 1 - \frac{1}{X_A}\mysqrt{\frac{\phi_1 X_1 X_2}{\phi_2}}, \label{eq:adversary-uncertainty-cond-2} \\
        \alpha_2(G) &\triangleq \frac{X_1 + X_2}{2 X_A} \mysqrt{\frac{\phi_1 X_2}{\phi_2 X_1}} - 1. \label{eq:adversary-uncertainty-cond-3}
    \end{align}
\end{theorem}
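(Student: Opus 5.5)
The plan is to reduce robustness to the nominal characterization of Lemma~\ref{lem:nominal} applied to each perturbed game, and then to show that a single transfer works uniformly. Write $s \triangleq \mysqrt{\frac{\phi_1 X_1 X_2}{\phi_2}}$. Only $X_A$ varies over ${\cal G}^{X_A}(G,\delta)$, so $X_1, X_2, \phi_1, \phi_2$ and the ratio assumption \eqref{eq:ratio-assumption} are fixed. The perturbed budget $X_A'$ ranges over $[(1-\delta)X_A, (1+\delta)X_A]$.

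\textbf{Necessity.} If a $\delta$-robust transfer exists, then every $G' \in {\cal G}^{X_A}(G,\delta)$ has a mutually beneficial transfer. I would first check, using the case analysis of Appendix~\ref{app:cases} (Case 1, where $X_A' \le s$), that no game with $X_A' \le s$ admits a mutually beneficial transfer. Applying this at $X_A' = (1-\delta)X_A$ forces $(1-\delta)X_A > s$, i.e.\ $\delta < \alpha_1(G)$. Next, consider the case where \eqref{eq:nominal-case-3} fails, which is the ``otherwise'' branch, since \eqref{eq:nominal-case-3} is equivalent to $X_2 - X_1 > 2\mysqrt{\phi_2X_1X_2/\phi_1}$. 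In that case Lemma~\ref{lem:nominal} says $G'$ must lie in regime (i) and satisfy \eqref{eq:nominal-case-2}. Because \eqref{eq:nominal-case-2} is monotone in $X_A'$, the binding game is $X_A' = (1+\delta)X_A$. Rearranging \eqref{eq:nominal-case-2} at that point gives exactly $\delta < \alpha_2(G)$. If instead \eqref{eq:nominal-case-3} holds, both regimes (i) and (ii) admit mutually beneficial transfers for every $X_A' > s$. To see this for regime (i), note that \eqref{eq:nominal-case-2} at $X_A' \le X_2 + s$ is implied by \eqref{eq:nominal-case-3}, which is a short inequality check. Hence only $\alpha_1$ constrains $\delta$ in this branch.

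\textbf{Sufficiency.} This is the main obstacle, because pointwise existence of a mutually beneficial transfer for each $X_A'$ does not by itself give one $\tau$ that works for all of them. I would use the explicit allocations from Appendix~\ref{app:cases} to write $U_1(G',\tau)$ and $U_2(G',\tau)$ for $\tau<0$ in closed form, and describe the set $T(X_A')\subset(-X_2,0)$ of mutually beneficial transfers. I expect $T(X_A')$ to be an interval whose endpoints are given by explicit square-root expressions. Near $\tau = 0^-$, the Adversary's reallocation in regime (i) is proportional to $\mysqrt{\phi_i \overline{X}_i}$, and the Players' gains have the sign of the quantity in \eqref{eq:nominal-case-2}. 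This suggests trying a small transfer $\tau = -\epsilon$ first. For small $\epsilon$, Player~1 benefits trivially because it receives budget. Player~2's change is, to first order, $\epsilon$ times a coefficient that depends continuously on $X_A'$, and its sign matches \eqref{eq:nominal-case-2}, or \eqref{eq:nominal-case-3} in regime (ii). Under the stated condition on $\delta$ this coefficient is strictly positive on the compact interval $[(1-\delta)X_A,(1+\delta)X_A]$, so it is bounded away from zero there. One complication is that the regime may switch inside the interval at $X_A' = X_2+s$. There I would check that the coefficient from each side is positive and that the error terms are uniformly small, so that a single small $\epsilon$ suffices.

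If the small-transfer argument fails, for instance because the first-order gain vanishes at $\tau=0$ in some regime, I would fall back on the explicit intervals. The strategy would be to show that their endpoints move monotonically in $X_A'$. Then the intersection of $T(X_A')$ over the uncertainty interval is the intersection of $T$ at the two extreme values $X_A' = (1\pm\delta)X_A$. It would remain to verify that these two intervals overlap exactly when $\delta < \alpha_1(G)$, together with $\delta < \alpha_2(G)$ in the ``otherwise'' case.
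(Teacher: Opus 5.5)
Your proposal is correct and follows essentially the paper's route: necessity comes from requiring every game in ${\cal G}^{X_A}(G,\delta)$ to admit a mutually beneficial transfer (Case 1 excluded at $(1-\delta)X_A$, the Case 2 condition binding at $(1+\delta)X_A$, and the Case 2/3 boundary handled via Lemma~\ref{lem:nominal-extra}(iii)), and sufficiency from one small $\tau<0$ whose first-order gain for Player 2 is bounded away from zero on the compact budget interval. Your check of both one-sided coefficients at the regime switch is more careful than the paper's appeal to continuity, because Player 2's marginal payoff jumps there without changing sign; the fallback plan is not needed.

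Two details still need filling in. First, Player 1's gain does not follow simply from receiving budget. The Adversary re-optimizes and shifts budget toward game 1; for example, in Case 2, $\overline{X}_{A,1}=(\phi_1\oxone\oxtwo/\phi_2)^{1/2}$ rises when $X_1<X_2$. So you need the case-by-case check of Lemma~\ref{lem:recipient}. Second, in the ``otherwise'' branch you need the converse of your boundary check: failure of \eqref{eq:nominal-case-3} forces $X_A(1+\alpha_2(G)) \le X_2 + s$. This is what guarantees that $\delta<\alpha_2(G)$ keeps every $X_A'$ out of regime (ii), where no mutually beneficial transfer exists.
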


The significance of Theorem \ref{thm:adversary-uncertainty} (proven in Appendix \ref{app:adversary}) is not the condition \eqref{eq:adversary-uncertainty-cond-1}--\eqref{eq:adversary-uncertainty-cond-3} itself, but rather that the set of games admitting a $\delta$-robust mutually beneficial transfer has positive measure for every $\delta \in [0, 1)$; this set is illustrated for several values of $\delta$ in Figure~\ref{fig:robust-adversary-region}.

\begin{corollary}\label{cor:adversary-uncertainty}
    For any $\delta \in [0, 1)$, the set of games $G \in \mathcal{G}$ that admit a transfer that is $\delta$-robust to Adversary uncertainty has positive measure.
\end{corollary}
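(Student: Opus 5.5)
The plan is to exhibit one explicit open region of games that satisfies the conditions of Theorem~\ref{thm:adversary-uncertainty} for the given $\delta$. Since the conditions are strict inequalities between continuous functions of $G$ on the open set $\mathcal{G}=\R_{>0}^5$, any nonempty region defined by them is open and so has positive Lebesgue measure. The conditions are: the ratio assumption $X_2/\phi_2 \ge X_1/\phi_1$ (which we will make strict), existence of a nominal mutually beneficial transfer via Lemma~\ref{lem:nominal}, and $\delta<\alpha_1(G)$, together with $\delta<\alpha_2(G)$ when $X_2 - X_1 \le 2\mysqrt{\phi_2X_1X_2/\phi_1}$. All of these are strict except the case split. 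To sidestep the case split, we will work inside the branch $X_2-X_1>2\mysqrt{\phi_2X_1X_2/\phi_1}$, which is itself a strict inequality. There only $\delta<\alpha_1(G)$ is needed.

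We normalize by taking $\phi_1=\phi_2=1$, $X_2=1$ and $X_1=\epsilon$ small. Then:
\begin{itemize}
\item The ratio condition reads $1>\epsilon$.
\item The branch condition reads $1-\epsilon>2\sqrt{\epsilon}$, which holds for small $\epsilon$.
\item We choose $X_A$ large, namely $X_A > X_2+\sqrt{\epsilon}=1+\sqrt{\epsilon}$, so that we are in case (ii) of Lemma~\ref{lem:nominal}. The mutual-benefit condition there is $1-\epsilon>2\sqrt{\epsilon}$, the same inequality as the branch condition, so it holds.
\item Finally, $\alpha_1(G)=1-\sqrt{\epsilon}/X_A$, which is at least $1-\sqrt{\epsilon}$ for $X_A\ge 1$. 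Choosing $\epsilon<(1-\delta)^2$ gives $\alpha_1>\delta$.
\end{itemize}
Hence the point $G^\star=(1,1,\epsilon,1,2)$ satisfies every condition strictly, for $\epsilon$ small enough depending on $\delta$. This is possible for every $\delta<1$ because $(1-\delta)^2>0$.

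Next we perturb. Each defining inequality is strict and continuous in $(\phi_1,\phi_2,X_1,X_2,X_A)$, so some open ball around $G^\star$ satisfies all of them. The set in question is the strict ratio region, intersected with the case-(ii) region and the nominal condition~\eqref{eq:nominal-case-3}, intersected with the branch region and $\{\delta<\alpha_1\}$. Every game in this ball has a mutually beneficial transfer by Lemma~\ref{lem:nominal}. By Theorem~\ref{thm:adversary-uncertainty} (in its first branch), every such game also has a $\delta$-robust transfer. An open nonempty ball has positive measure, which proves the corollary.

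The only delicate point is ensuring the hypotheses of Theorem~\ref{thm:adversary-uncertainty} hold throughout the neighborhood, not just at $G^\star$. These hypotheses are the existence of a nominal mutually beneficial transfer and the ratio assumption. We handle this by choosing strict versions of the case-(ii) and ratio inequalities at $G^\star$, so that the boundary of the case split in Lemma~\ref{lem:nominal} and of the branch split in~\eqref{eq:adversary-uncertainty-cond-1} is avoided.
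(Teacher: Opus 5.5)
Your proposal is correct and follows essentially the same route as the paper: place a game in case (ii) of Lemma~\ref{lem:nominal} inside the first branch of \eqref{eq:adversary-uncertainty-cond-1}, drive $\alpha_1(G)\to 1$ by shrinking $X_1$, and use strictness and continuity of every inequality to get an open neighborhood of positive measure. The only difference is cosmetic: you normalize $\phi_1=\phi_2=X_2=1$ and fix $X_A=2$, whereas the paper keeps the valuations and $X_A$ arbitrary and scales $X_2=\kappa X_1$.
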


\begin{proof}
    We first construct one such game. Fix $\delta \in [0, 1)$, the valuations, and $X_A$. Take $X_2 = \kappa X_1$, where $\kappa$ satisfies $(\kappa - 1)/\kappa^{1/2} > 2(\phi_2/\phi_1)^{1/2}$. Since $(\kappa - 1)/\kappa^{1/2} < \kappa^{1/2}$, this gives $\kappa > \phi_2/\phi_1 \iff X_2/\phi_2 > X_1/\phi_1$. Next, take $X_1$ small enough that $X_1(\kappa + (\kappa\phi_1/\phi_2)^{1/2}) < X_A$, which places $G$ in case (ii) of Lemma~\ref{lem:nominal}. Since
    \begin{equation*}
        X_2 - X_1 = (\kappa - 1) X_1 > 2 \mysqrt{\frac{\kappa \phi_2}{\phi_1}} X_1 = 2 \mysqrt{\frac{\phi_2 X_1 X_2}{\phi_1}},
    \end{equation*}
    which is equivalent to \eqref{eq:nominal-case-3}, Lemma~\ref{lem:nominal} implies that $G$ has a mutually beneficial transfer, and \eqref{eq:adversary-uncertainty-cond-1} reduces to $\delta < \alpha_1(G)$. Moreover,
    \begin{equation*}
        \alpha_1(G) = 1 - \frac{X_1}{X_A}\mysqrt{\frac{\kappa \phi_1}{\phi_2}} \;\to\; 1 \quad \text{as } X_1 \to 0,
    \end{equation*}
    so decreasing $X_1$ further if needed gives $\delta < \alpha_1(G)$. Finally, each inequality used above is strict and continuous in the game parameters, so all of them continue to hold on an open neighborhood of $G$ in $\mathcal{G}$. By Lemma~\ref{lem:nominal} and Theorem~\ref{thm:adversary-uncertainty}, every game in this neighborhood admits a transfer that is $\delta$-robust to Adversary uncertainty, so this set has positive measure.
\end{proof}

This result establishes that there is no prohibitively large Adversary uncertainty beyond which opportunities for alliance formation vanish completely. As such, this result suggests that Players should still consider alliance formation as a strategic mechanism, even when they have almost no information about their opponent's resource levels.

\begin{figure}
    \centering
    \includegraphics[width=\linewidth]{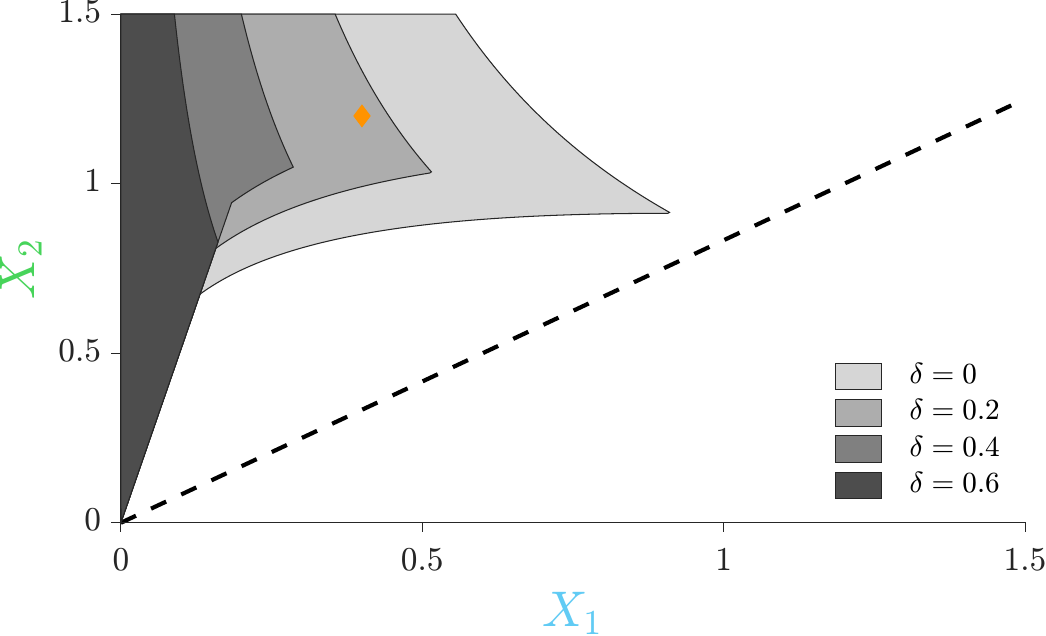}
    \caption{Games in the $X_1$--$X_2$ plane (with $\phi_1 = 12$, $\phi_2 = 10$, $X_A = 1$) that admit a $\delta$-robust mutually beneficial transfer under Adversary uncertainty, for $\delta \in \{0, 0.2, 0.4, 0.6\}$. Lighter regions correspond to smaller $\delta$ (which contain darker regions of larger $\delta$). The dashed line is $X_2/X_1 = \phi_2/\phi_1$. The diamond ($\orangelozenge$) marks the game in Figure~\ref{fig:coalitional_basic}; consistent with Figure~\ref{fig:uncertainty-example}, it admits a transfer that is $0.2$-robust, but none that is $0.4$-robust, to Adversary uncertainty.}
    \label{fig:robust-adversary-region}
\end{figure}

\subsection{Ally Uncertainty}

Next, we consider the setting where the Players may be uncertain of one another's budgets. As before, the Adversary knows the true game parameters $G$ exactly, but the Players do not: each Player $i \in \{1, 2\}$ knows only that the game belongs to the set
\begin{equation}\label{eq:ally-uncertainty-set}
    {\cal G}^{X_{-i}}(G, \delta)
    \triangleq
    \left\{
        G|_{X_{-i} \mapsto X_{-i}'}
        \;\middle|\;
        1 - \delta \leq \frac{X_{-i}'}{X_{-i}} \leq 1 + \delta
    \right\}.
\end{equation}
Here, $\delta \in [0, 1)$ captures how uncertain each Player is about the other's budget; for simplicity, we assume a common $\delta$, but neither Player knows the other's uncertainty set (i.e., Player $i$ does not know that Player $-i$ also has uncertainty of the form \eqref{eq:ally-uncertainty-set}).

It turns out that the direction of the transfer $\tau$ itself allows the Players to further reason about one another's budgets. Recall from Lemma~\ref{lem:nominal} that mutually beneficial transfers always flow from the Player with the larger budget-to-valuation ratio to the Player with the smaller one, and the giving Player only benefits if their ratio remains the larger one after the transfer \cite{kovenock2012coalitional}. For this uncertainty model, this result implies that Player 2 only offers resources to Player 1 if they are certain that their post-transfer ratio $\overline{X}_2/\phi_2$ is greater than $\overline{X}_1'/\phi_1$ for any $X_1' \leq (1 + \delta) X_1$. In turn, if Player 1 is offered resources by Player 2, Player 1 can infer that $\overline{X}_2/\phi_2$ is greater than their own post-transfer ratio $\overline{X}_1/\phi_1$. Hence, we define the updated uncertainty sets
\begin{align}
    {\cal G}^{X_{-i}}(G, \delta, \tau)
    &\triangleq
    \left\{ G' \in {\cal G}^{X_{-i}}(G, \delta) : \overline{X}_{-i}' \geq \ell_i(G, \tau) \right\}, \nonumber \\
    \ell_i(G, \tau)
    &\triangleq
    \begin{cases}
        (1 - \delta) X_{-i} & (-1)^i \tau \leq 0, \\
        \frac{\phi_{-i}}{\phi_i} \overline{X}_i & \text{otherwise},
    \end{cases}
\end{align}
With this augmentation, we can now define robust mutually beneficial transfers as before:

\begin{definition}
    Given a game $G \in \mathcal{G}$, we will say that a mutually beneficial transfer $\tau$ is $\delta$-robust to ally uncertainty if for each Player $i \in \{1,2\}$
    \begin{equation}
        U_i(G_i, \tau) > U_i(G_i, 0)
    \end{equation}
    for all games $G_i \in {\cal G}^{X_{-i}}(G, \delta, \tau)$.
\end{definition}
\noindent Figure~\ref{fig:uncertainty-example} illustrates a game with a transfer that is $0.4$-robust to ally uncertainty.

Our next results are of the same flavor as Theorem~\ref{thm:adversary-uncertainty} and Corollary~\ref{cor:adversary-uncertainty} for the ally uncertainty model:

\begin{theorem}\label{thm:ally-uncertainty}
    Consider any game $G \in \mathcal{G}$ that has a mutually beneficial transfer and satisfies $\frac{X_2}{\phi_2} \geq \frac{X_1}{\phi_1}$. Then, $G$ also has a transfer that is $\delta$-robust to ally uncertainty if and only if
    \begin{equation}\label{eq:ally-uncertainty-cond-1}
        \delta < \begin{cases}
            \min\{\beta_1(G), \beta_2(G), \beta_4(G)\} & X_A \leq \beta_5(G) X_2, \\
            \min\{\beta_1(G), \beta_2(G), \beta_6(G)\} & \text{otherwise},
        \end{cases}
    \end{equation}
    where
    \begin{align*}
        \beta_1(G) &\triangleq \frac{\phi_1 X_2}{\phi_2 X_1} - 1, \quad
        \beta_2(G) \triangleq \frac{\phi_2 X_A^2}{\phi_1 X_1 X_2} - 1, \\
        \beta_3(G) &\triangleq \mysqrt{\beta_2(G) + 1} - \mysqrt{\beta_2(G) + 1 - \frac{X_2}{X_1}}, \\
        \beta_4(G) &\triangleq \begin{cases}
            \bigl(\beta_3(G)\bigr)^2 - 1 & X_A \geq X_2 \mysqrt{\phi_1 / \phi_2}, \\
            +\infty & \text{otherwise},
        \end{cases} \\
        \beta_5(G) &\triangleq \mysqrt{1 + \phi_1 / \phi_2}, \quad
        \beta_6(G) \triangleq \frac{(\beta_5(G) - 1) X_2}{(\beta_5(G) + 1) X_1} - 1.
    \end{align*}
\end{theorem}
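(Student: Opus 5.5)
The plan is to reduce robustness to two one-dimensional worst-case problems, one per Player. First I will write out, from Appendix~\ref{app:cases}, the explicit form of $U_1(G,\tau)$ and $U_2(G,\tau)$ for $\tau<0$ in each regime of the Adversary's best response. When the Adversary's budget is large enough to go to both games, the Adversary equalizes the marginal value $\phi_i\overline{X}_i/X_{A,i}^2$ across games, which gives $X_{A,i}\propto(\phi_i\overline{X}_i)^{1/2}$. By Lemma~\ref{lem:nominal}, only $\tau<0$ can be mutually beneficial. Player~2 is the giver, so $\ell_2(G,\tau)=(1-\delta)X_1$. Player~1 is the receiver, so $\ell_1(G,\tau)=\frac{\phi_2}{\phi_1}\overline{X}_1$, which truncates Player~1's uncertainty about $X_2$ from below.

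The condition then splits naturally into two requirements.
\begin{enumerate}[label=(\alph*)]
\item Player~2 must benefit for every $X_1'\in[(1-\delta)X_1,(1+\delta)X_1]$. Player~2's gain comes from the Adversary shifting budget toward game~1, and this gain shrinks as $X_1'$ grows. The worst case should therefore be $X_1'=(1+\delta)X_1$.
\item Player~1 must benefit for every admissible $X_2'$, meaning $X_2'\in[(1-\delta)X_2,(1+\delta)X_2]$ with $X_2'+\tau\ge\frac{\phi_2}{\phi_1}(X_1-\tau)$.
\end{enumerate}

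For each fixed $\tau$, I will compute the set of $\delta$ on which (a) and (b) both hold, and then take the union over $\tau<0$.

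\textbf{Player~2's side.} I will substitute $X_1\mapsto(1+\delta)X_1$ into the nominal characterization of Lemma~\ref{lem:nominal}. Two things must survive the perturbation.
\begin{itemize}
\item The ratio ordering must survive: $X_2/\phi_2>(1+\delta)X_1/\phi_1$, which is exactly $\delta<\beta_1(G)$.
\item The game must stay in the regime where the Adversary is split between both games. Using the lower threshold $X_A>(\phi_1X_1'X_2/\phi_2)^{1/2}$ from Lemma~\ref{lem:nominal}, this becomes $\delta<\beta_2(G)$.
\end{itemize}
I expect the remaining constraint to come from the case-(i)/(ii) inequalities \eqref{eq:nominal-case-2}--\eqref{eq:nominal-case-3} evaluated at $X_1'$. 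In case (ii), condition \eqref{eq:nominal-case-3} with $X_1'$ is the quadratic $1-\frac{(1+\delta)X_1}{X_2}>2\bigl(\frac{\phi_2(1+\delta)X_1}{\phi_1X_2}\bigr)^{1/2}$ in $\bigl((1+\delta)X_1/X_2\bigr)^{1/2}$. Solving it should give $\beta_6$, with $\beta_5=(1+\phi_1/\phi_2)^{1/2}$ appearing from the root. In case (i), the corresponding quadratic comes from \eqref{eq:nominal-case-2}, and its root is $\beta_3$; hence $\beta_4=\beta_3^2-1$. The threshold $X_A\le\beta_5X_2$ should be precisely where the binding perturbed game switches between case (i) and case (ii). The side condition $X_A\ge X_2(\phi_1/\phi_2)^{1/2}$ in $\beta_4$ flags when the square root in $\beta_3$ is real; when it fails, the case-(i) constraint is vacuous.

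\textbf{Player~1's side.} I expect Player~1 to be the easier one. Receiving resources helps Player~1 directly, and the truncation $\ell_1$ guarantees that Player~2 remains the larger-ratio Player after the transfer. This is exactly the condition under which Player~1 benefits in the nominal analysis of \cite{kovenock2012coalitional}. So I would show that (b) holds for every admissible $X_2'$ once $\tau$ is small enough, imposing no further constraint beyond $\beta_1$ and $\beta_2$.

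\textbf{Existence of a common $\tau$ and necessity.} I would show that the set of good $\tau$ for each Player is an interval $(\tau^*,0)$. Good transfers are then available exactly when the perturbed conditions are strict, and taking $\tau\to0^-$ yields a common transfer. For necessity, if $\delta$ violates any $\beta$ bound, I will exhibit the extreme game, either $X_1'=(1+\delta)X_1$ or the extreme $X_2'$, in which Lemma~\ref{lem:nominal} rules out any beneficial $\tau<0$ for the relevant Player.

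\textbf{Main obstacle.} The hardest step will be rigorously justifying the worst-case monotonicity in $X_1'$ across the regime changes of the Adversary's response, including the case boundaries of Lemma~\ref{lem:nominal}. It also requires matching the resulting algebra exactly to $\beta_3$ through $\beta_6$. I would handle it by differentiating Player~2's payoff gain with respect to $X_1'$ separately within each case.
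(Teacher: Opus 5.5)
Your outline follows the paper's reduction: Player~1's side is automatic because of the truncation $\ell_1$, only $X_{\max}=(1+\delta)X_1$ matters for Player~2, and the $\beta$'s come from Lemma~\ref{lem:nominal} at $X_{\max}$. But the step you single out as the main obstacle aims at a false statement. You plan to show, by differentiating within each Case, that for each fixed $\tau<0$ Player~2's gain $U_2(G|_{X_1\mapsto x},\tau)-U_2(G|_{X_1\mapsto x},0)$ decreases in $x$.

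This fails across the Case~2/3 boundary, where $U_2$ is continuous but kinked. Write $a=\overline{X}_1$ and $b=\overline{X}_2$. Then $\partial_a U_2=(\phi_1\phi_2)^{1/2}/(4(ab)^{1/2})$ in Case~2 and $(\phi_1\phi_2)^{1/2}(b/a)^{1/2}/(4X_A)$ in Case~3. On the boundary the Case~3 value is smaller by the factor $b/X_A<1$.

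When the Case~3 condition holds, a small gift moves the game from Case~3 into Case~2. So for $x$ just below the boundary budget, the transferred game is in Case~2 and the untransferred one is in Case~3. The $x$-derivative of the gain is then essentially this positive jump, for every small $\tau$.

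For instance, take $\phi_1=100$, $\phi_2=1$, $X_2=1$, $X_A=1.97$ and $\tau=-0.001$. The gain is about $0.0197$ at $x=0.009$ but about $0.0220$ at $x=0.0092$. Even Player~2's marginal payoff at $\tau=0$ is non-monotone in $x$: here it jumps from about $12.7$ ($D_3$) to about $25.0$ ($D_2$) as $x$ crosses the boundary budget $x\approx 0.0094$. So no within-Case differentiation can deliver the worst-case claim in the form you state it.

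What is monotone is only the \emph{sign} of that marginal, i.e.\ the existence of a mutually beneficial transfer. This is the paper's Lemma~\ref{lem:monotone}, which rests on three facts:
\begin{enumerate}
\item the left side of the Case~3 condition decreases in $x$;
\item $D_2$ at $\tau=0$ decreases in $x$ on $x<X_2$, a range forced by the Case~2 condition together with $X_A>Y$;
\item Lemma~\ref{lem:nominal-extra}(iii) carries the sign across the kink.
\end{enumerate}

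Even with this, you still need a uniform transfer. Positivity of the marginal is known only pointwise in $x$, so you need a compactness argument. It must show that the applicable one-sided derivatives (both $D_2$ and $D_3$ at boundary points) stay above a positive constant for all $x\in[X_1(1-\delta),X_{\max}]$ and all small $|\tau|$. Then a single small $\tau<0$ works. Your ``taking $\tau\to 0^-$'' gets this uniformity only from the false monotonicity. Your claim that each good set is an interval $(\tau^*,0)$ is unproven and not needed.

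The same boundary equivalence is also what you need to confirm that, when $X_A\le\beta_5(G)X_2$, the bound $\delta<\beta_4(G)$ remains the correct one even if $X_{\max}$ lies in Case~3.
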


As with Theorem~\ref{thm:adversary-uncertainty}, the significance of Theorem~\ref{thm:ally-uncertainty} (proven in Appendix \ref{app:ally}) lies not in the specific form of \eqref{eq:ally-uncertainty-cond-1}, but in the fact that the set of games admitting a transfer that is $\delta$-robust to ally uncertainty has positive measure for every $\delta \in [0, 1)$; this set is illustrated for several values of $\delta$ in Figure~\ref{fig:robust-ally-region}.

\begin{corollary}\label{cor:ally-uncertainty}
    For any $\delta \in [0, 1)$, the set of games $G \in \mathcal{G}$ that admit a transfer that is $\delta$-robust to ally uncertainty has positive measure.
\end{corollary}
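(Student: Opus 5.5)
The argument follows the proof of Corollary~\ref{cor:adversary-uncertainty}. We build one explicit game that satisfies the hypotheses of Theorem~\ref{thm:ally-uncertainty} and condition \eqref{eq:ally-uncertainty-cond-1} with all inequalities strict. Then we use continuity to get an open neighborhood of such games. Fix $\delta\in[0,1)$ and the valuations $\phi_1,\phi_2$. Set $X_2=\kappa X_1$ with $\kappa$ a large parameter. Then $\beta_1(G)=\kappa\phi_1/\phi_2-1$, which exceeds $\delta$ once $\kappa$ is large. This choice also gives $X_2/\phi_2>X_1/\phi_1$ strictly.

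Next I choose which branch of \eqref{eq:ally-uncertainty-cond-1} to use. The second branch avoids $\beta_4$, whose definition through $\beta_3$ involves a square root that may be undefined. In that branch, $\beta_6(G)=\frac{\beta_5-1}{\beta_5+1}\kappa-1$, and $\beta_5>1$ depends only on the valuations, so $\beta_6\to\infty$ as $\kappa\to\infty$. The branch requires $X_A>\beta_5X_2$. Since $\beta_2(G)=\frac{\phi_2X_A^2}{\phi_1\kappa X_1^2}-1$, I take $X_A=cX_2=c\kappa X_1$ with a constant $c>\beta_5$. This gives $\beta_2=\frac{\phi_2 c^2\kappa}{\phi_1}-1\to\infty$. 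So once $\kappa$ is large, $\beta_1$, $\beta_2$ and $\beta_6$ all exceed $\delta$.

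It remains to check that $G$ has a mutually beneficial transfer. I do this through Lemma~\ref{lem:nominal}. We have $(\phi_1X_1X_2/\phi_2)^{1/2}=X_1(\kappa\phi_1/\phi_2)^{1/2}$, which is of order $\sqrt\kappa X_1$. Meanwhile $X_A=c\kappa X_1$ exceeds $X_2+(\phi_1X_1X_2/\phi_2)^{1/2}=X_1\bigl(\kappa+(\kappa\phi_1/\phi_2)^{1/2}\bigr)$ for large $\kappa$, because $c>\beta_5>1$. This places $G$ in case (ii) of the lemma. Condition \eqref{eq:nominal-case-3} then reads $(\kappa-1)/\sqrt\kappa>2(\phi_2/\phi_1)^{1/2}$, which holds for large $\kappa$. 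With these choices, Theorem~\ref{thm:ally-uncertainty} yields a $\delta$-robust transfer.

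All the inequalities used are strict: $X_A>\beta_5X_2$, case (ii) of Lemma~\ref{lem:nominal}, \eqref{eq:nominal-case-3}, $X_2/\phi_2>X_1/\phi_1$, and $\delta<\beta_1,\beta_2,\beta_6$. Each involves a function that is continuous on $\mathcal G$. Hence they persist on an open neighborhood of $G$. By Lemma~\ref{lem:nominal} and Theorem~\ref{thm:ally-uncertainty}, every game in that neighborhood admits a $\delta$-robust transfer, so the set has positive measure.

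The main care point is selecting the branch so that the awkward $\beta_4$ never appears. It is also important that the strict branch inequality $X_A>\beta_5X_2$ is stable under perturbation, so the whole neighborhood stays in the same branch. A secondary check is that the condition $c>\beta_5$ used for the branch is compatible with case (ii) of Lemma~\ref{lem:nominal}, and it is, since $\beta_5>1$.
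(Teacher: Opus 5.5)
Your proof is correct and takes essentially the same route as the paper's. You build a game with $X_2=\kappa X_1$ and $X_A>\beta_5(G)X_2$ that lies in case (ii) of Lemma~\ref{lem:nominal}, so the second branch of \eqref{eq:ally-uncertainty-cond-1} applies, and then extend to an open neighborhood via continuity of the strict inequalities. The only difference is cosmetic: you set $X_A=c\kappa X_1$ and take $\kappa$ large so that $\beta_1,\beta_2,\beta_6$ each exceed $\delta$ directly, whereas the paper uses the explicit threshold $\kappa>(1+\delta)\kappa_0$ and shows that $\beta_6$ is the binding term in the minimum.
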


\begin{proof}
    We first construct one such game. Fix $\delta \in [0, 1)$, the valuations, and $X_1 > 0$. Note that $\beta_5(G) = (1 + \phi_1/\phi_2)^{1/2} > 1$ depends only on the valuations, and define
    \begin{equation*}
        \kappa_0 \triangleq \frac{\beta_5(G) + 1}{\beta_5(G) - 1} = \bigl(\beta_5(G) + 1\bigr)^2 \frac{\phi_2}{\phi_1}.
    \end{equation*}
     Take $X_2 = \kappa X_1$, where $\kappa > (1 + \delta) \kappa_0$. Since $\kappa > \kappa_0 > \phi_2/\phi_1$, we have $X_2/\phi_2 > X_1/\phi_1$. Next, take $X_A$ large enough that $X_A > X_2 + (\phi_1 X_1 X_2/\phi_2)^{1/2}$ and $X_A > \beta_5(G) X_2$. The first inequality places $G$ in case (ii) of Lemma~\ref{lem:nominal}. Moreover, $\kappa_0^{1/2} - \kappa_0^{-1/2} = \bigl((\beta_5(G) + 1) - (\beta_5(G) - 1)\bigr)(\phi_2/\phi_1)^{1/2} = 2(\phi_2/\phi_1)^{1/2}$, and $\kappa \mapsto \kappa^{1/2} - \kappa^{-1/2}$ is increasing, so
    \begin{equation*}
        X_2 - X_1 = (\kappa - 1) X_1 > 2 \mysqrt{\frac{\kappa \phi_2}{\phi_1}} X_1 = 2 \mysqrt{\frac{\phi_2 X_1 X_2}{\phi_1}}.
    \end{equation*}
    This is equivalent to \eqref{eq:nominal-case-3}, so Lemma~\ref{lem:nominal} implies that $G$ has a mutually beneficial transfer. The second inequality on $X_A$ means that \eqref{eq:ally-uncertainty-cond-1} reduces to $\delta < \min\{\beta_1(G), \beta_2(G), \beta_6(G)\}$. Of these, only $\beta_6(G)$ is relevant, since
    \begin{align*}
        \beta_6(G) + 1 &= \frac{\kappa}{\kappa_0} = \frac{\kappa \phi_1}{\phi_2 (\beta_5(G) + 1)^2} < \frac{\kappa \phi_1}{\phi_2} = \beta_1(G) + 1, \\
        \beta_2(G) + 1 &= \frac{\phi_2 X_A^2}{\phi_1 X_1 X_2} > \frac{\phi_2 \beta_5(G)^2 X_2}{\phi_1 X_1} \\
        &= \Bigl(1 + \frac{\phi_2}{\phi_1}\Bigr) \kappa > \frac{\kappa}{\kappa_0} = \beta_6(G) + 1.
    \end{align*}
    Moreover, $\beta_6(G) = \kappa/\kappa_0 - 1 > (1 + \delta) - 1 = \delta$. Finally, each inequality used above is strict and continuous in the game parameters, so all of them continue to hold on an open neighborhood of $G$ in $\mathcal{G}$. By Lemma~\ref{lem:nominal} and Theorem~\ref{thm:ally-uncertainty}, every game in this neighborhood admits a transfer that is $\delta$-robust to ally uncertainty, so this set has positive measure.
\end{proof}

Similar to Corollary \ref{cor:adversary-uncertainty}, this result establishes that there is no prohibitively large Adversary uncertainty beyond which opportunities for alliance formation vanish completely. As such, this result suggests that Players should still consider alliance formation as a strategic mechanism, even when they have almost no information about one another's resource levels.

\begin{figure}
    \centering
    \includegraphics[width=\linewidth]{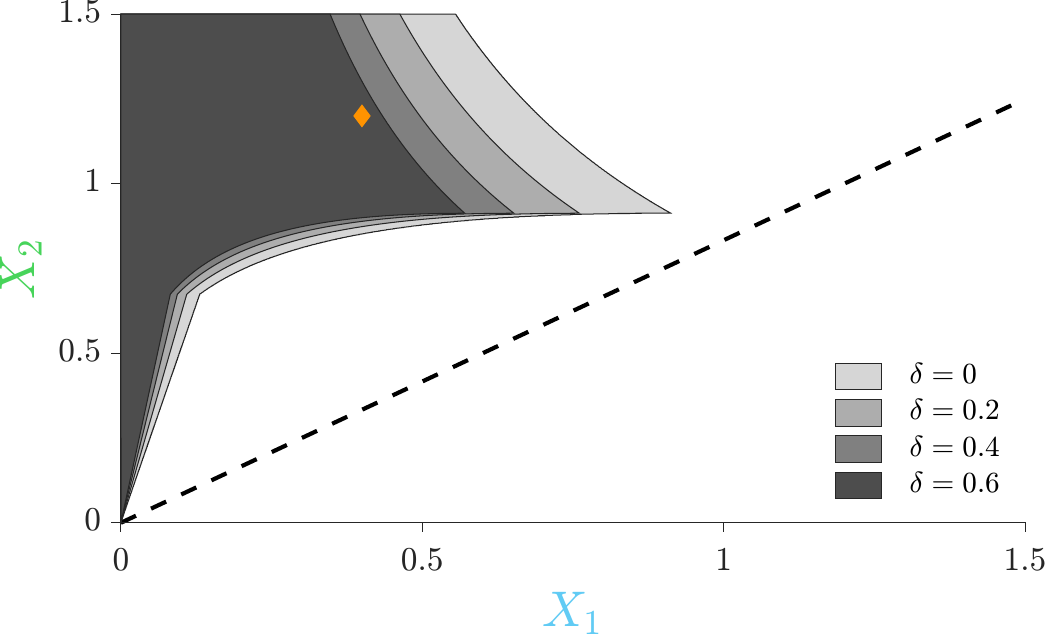}
    \caption{Games in the $X_1$--$X_2$ plane (with $\phi_1 = 12$, $\phi_2 = 10$, $X_A = 1$) that admit a $\delta$-robust mutually beneficial transfer under ally uncertainty, for $\delta \in \{0, 0.2, 0.4, 0.6\}$. Lighter regions correspond to smaller $\delta$ (which contain darker regions of larger $\delta$). The dashed line is $X_2/X_1 = \phi_2/\phi_1$. The diamond ($\orangelozenge$) marks the game in Figure~\ref{fig:coalitional_basic}, which admits a transfer that is $0.6$-robust to ally uncertainty.}
    \label{fig:robust-ally-region}
\end{figure}

\section{Conclusion}

In this work, we examine a multi-stage coalitional General Lotto game in which two players compete against a common adversary by allocating their limited budgets towards valued contests, but lack precise knowledge of the budgets held by their opponent or their ally. We first consider players who are uncertain about the adversary's budget, and we characterize exactly when a budget transfer remains mutually beneficial in every game consistent with the players' knowledge, showing that such robust alliances exist in a nontrivial subset of games for any level of uncertainty. Then, we consider players who are instead uncertain about one another's budgets, and we show that robust mutually beneficial transfers again exist in a nontrivial subset of games for any level of uncertainty. These results lend novel insight into the robustness of alliances to uncertainty and prompt further investigation into the effects of uncertainty in more general networks of competition, such as those involving many allies or multiple adversaries.

\appendix
    
\subsection{Adversary's Best Response}\label{app:cases}

The Adversary's best response to any transfer performed by the Players can be completely characterized by seven distinct Cases \cite{kovenock2012coalitional}. Four of these Cases address the setting where $\frac{\oxone}{\pq} \leq \frac{\oxtwo}{\pw}$, meaning that the ratio of Player 1's budget to contest valuation (after a transfer has been performed) is less than that of Player 2; the remaining three Cases address the setting where $\frac{\oxone}{\pq} > \frac{\oxtwo}{\pw}$ and can be obtained by a simple swapping of indices. Hence, without loss of generality, we describe below the Adversary's behavior for the four Cases that comprehensively describe all games
\begin{equation}
    \OG \in \mathcal{G}_{1 \leq 2} \triangleq \lcurl{G \in \mathcal{G} \suchthat \frac{\oxone}{\pq} \leq \frac{\oxtwo}{\pw}}
\end{equation}
as illustrated in Figure \ref{fig:cases}.

\begin{figure}
    \centering
    \includegraphics[width=\linewidth]{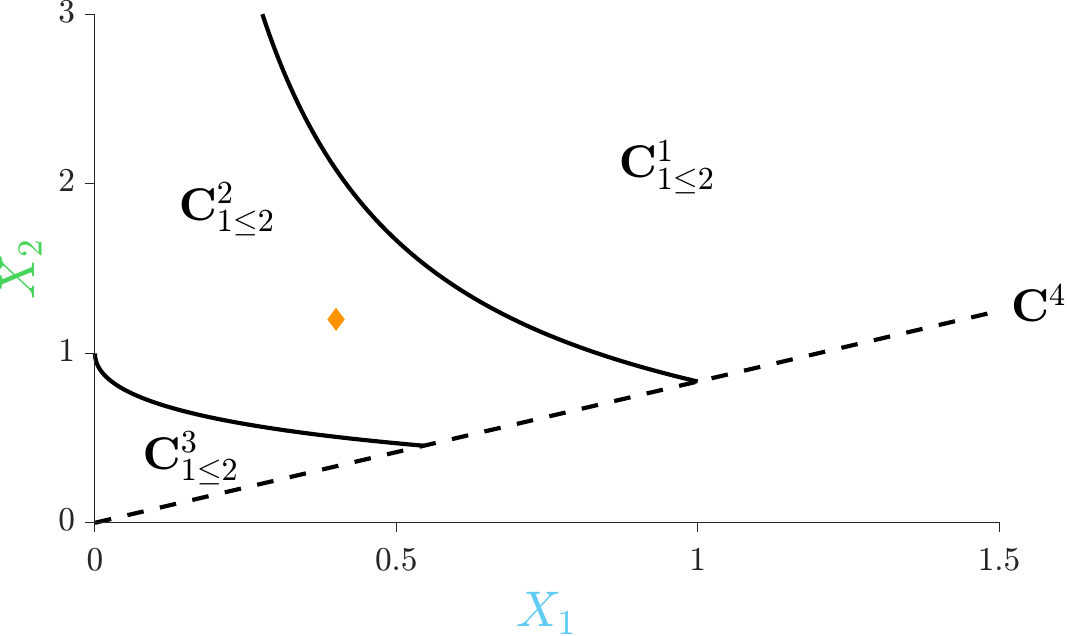}
    \caption{The Cases that delineate the form of the Adversary's best response, plotted in the $X_1$-$X_2$ space for fixed $\phi_1 = 12$, $\phi_2 = 10$, $X_A = 1$. The diamond ($\orangelozenge$) marks the game in Figure~\ref{fig:coalitional_basic}, which lies in Case 2.}
    \label{fig:cases}
\end{figure}

\vspace{.2cm}
\noindent\emph{Case 1:} If $\OG$ belongs to
    \begin{equation}\label{eq:case_1}
        \C_{1 \leq 2}^1 \triangleq \lcurl{\OG \in \mathcal{G}_{1 \leq 2} \suchthat
        \begin{gathered}
            \frac{\oxone}{\pq} \neq \frac{\oxtwo}{\pw}, \\
            X_A \leq \mysqrt{\frac{\oxone \oxtwo \pq}{\pw}}
        \end{gathered}},
    \end{equation}
    then according to \eqref{eq:adversary_payoff}, every unit of budget that the Adversary allocates towards Lotto game 1 has a greater marginal payoff than if it were allocated to Lotto game 2. Hence, the Adversary allocates their entire budget towards game 1 ($\xai{1} = X_A$).

\vspace{.2cm}
\noindent\emph{Case 2:} If $\OG$ belongs to
    \begin{equation}\label{eq:case_2}
        \C_{1 \leq 2}^2 \triangleq \lcurl{\OG \in \mathcal{G}_{1 \leq 2} \suchthat 0 < X_A - \mysqrt{\frac{\oxone \oxtwo \pq}{\pw}} \leq \oxtwo},
    \end{equation}
    then the Adversary's budget is sufficiently large such that it is optimal to allocate $\xai{1} \geq \oxone$, thereby hitting the range of diminishing returns (see \eqref{eq:adversary_payoff}) in Lotto game 1. Thus, they allocate budget towards Lotto game 2 such that the marginal payoffs in each game are equal, yielding $\xai{1} = \mysqrt{\frac{\oxone \oxtwo \pq}{\pw}}$.

\vspace{.2cm}
\noindent\emph{Case 3:} If $\OG$ belongs to
    \begin{equation}\label{eq:case_3}
        \C_{1 \leq 2}^3 \triangleq \lcurl{\OG \in \mathcal{G}_{1 \leq 2} \suchthat X_A - \mysqrt{\frac{\oxone \oxtwo \pq}{\pw}} > \oxtwo},
    \end{equation}
    then the Adversary's budget is larger than both Players' combined (in particular, $X_1 + X_2 < X_A$). Thus, the Adversary equates the marginal payoffs in the range of diminishing returns in both games, which yields $\xai{1} = \frac{X_A \mysqrt{\oxone \pq}}{\mysqrt{\oxone \pq} + \mysqrt{\oxtwo \pw}}$.

\vspace{.2cm}
\noindent\emph{Case 4:} If $\OG$ belongs to
    \begin{equation}\label{eq:case_4}
        \C_{1 \leq 2}^4 \triangleq \lcurl{\OG \in \mathcal{G}_{1 \leq 2} \suchthat \frac{\oxone}{\pq} = \frac{\oxtwo}{\pw} \text{ and } X_1 + X_2 \geq X_A},
    \end{equation}
    then the marginal payoff in the two games is already equal. Thus, the Adversary is indifferent between the two games, and any allocation satisfying $\xai{1} + \xai{2} = X_A$ and $\xai{i} \leq \overline{X}_i$, $i \in \{1, 2\}$, is optimal.

\subsection{Results for the Nominal Game}\label{app:nominal}

Throughout the remaining appendices, we consider games satisfying \eqref{eq:ratio-assumption}, and we write
\begin{equation*}
    Y \triangleq \mysqrt{\frac{\phi_1 X_1 X_2}{\phi_2}}
\end{equation*}
for the Adversary's allocation to game 1 in Case 2. The game lies in Case 1 if $X_A \le Y$, in Case 2 if $Y < X_A \le X_2 + Y$, and in Case 3 if $X_A > X_2 + Y$. Lemma~\ref{lem:nominal} covers Cases 2 and 3; we also use the following facts from \cite{kovenock2012coalitional}.

\begin{lemma}\label{lem:nominal-extra}
    \begin{enumerate}[label=(\roman*)]
        \item A Player does not gain from giving resources if, after the transfer, their budget-to-valuation ratio is at most their ally's. In particular, under \eqref{eq:ratio-assumption}, Player 1 never gains from giving resources.
        \item Suppose \eqref{eq:ratio-assumption} holds strictly. In Case 1, $G$ has no mutually beneficial transfer.
        \item On the boundary between Cases 2 and 3, i.e., when $X_A = X_2 + Y$, conditions \eqref{eq:nominal-case-2} and \eqref{eq:nominal-case-3} are equivalent.
    \end{enumerate}
\end{lemma}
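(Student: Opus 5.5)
For part (iii), I will substitute $X_A = X_2 + Y$ directly and reduce both conditions to one inequality. The key identity is
\begin{equation*}
2\mysqrt{\frac{\phi_2 X_1}{\phi_1 X_2}} = \frac{2X_1}{Y},
\end{equation*}
which holds because $Y = (\phi_1 X_1 X_2/\phi_2)^{1/2}$. With this, \eqref{eq:nominal-case-2} at $X_A = X_2+Y$ reads $(X_1+X_2)Y > 2X_1(X_2+Y)$, which simplifies to $Y(X_2 - X_1) > 2X_1X_2$. Dividing by $X_2 Y>0$ gives $1 - X_1/X_2 > 2X_1/Y$, and this is exactly \eqref{eq:nominal-case-3}. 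This part is pure algebra.

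For part (ii), I will use the Case 1 best response from Appendix~\ref{app:cases}. At $\tau=0$, strict inequality in \eqref{eq:ratio-assumption} rules out the tie case, so the Adversary's response is unique: $\xai{1}=X_A$ and $\xai{2}=0$. By \eqref{eq:player_payoff}, with $X_{A,2}=0<X_2$, Player 2 receives $U_2(G,0)=\phi_2(1-0)=\phi_2$. This is already the maximum possible value in Lotto game 2. Hence $U_2(G,\tau)>U_2(G,0)$ is impossible for every $\tau$, and no transfer is mutually beneficial.

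For part (i), I will argue by symmetry for the giver being Player 1, so $\tau>0$ and $\overline{X}_1/\phi_1 \le \overline{X}_2/\phi_2$ after the transfer. Under \eqref{eq:ratio-assumption}, every $\tau>0$ keeps the post-transfer game in $\mathcal{G}_{1\le 2}$. I will then show that $U_1$ is nonincreasing in $\tau$ within each of Cases 1--3, using the explicit $\xai{1}$ formulas.
\begin{itemize}
\item In Case 1, $U_1$ is increasing in $\overline{X}_1$ with $X_{A,1}=X_A$ fixed, so it decreases as $\tau$ grows.
\item In Case 2, $U_1=\phi_1\bigl(1-\tfrac12(\phi_1\overline{X}_2/(\phi_2\overline{X}_1))^{1/2}\bigr)$, which decreases because $\overline{X}_2$ rises and $\overline{X}_1$ falls.
\item In Case 3, $U_1=\phi_1\overline{X}_1/(2\xai{1})$, which is proportional to $\overline{X}_1\sqrt{\phi_1}+\sqrt{\phi_2\overline{X}_1\overline{X}_2}$. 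Here $\overline{X}_1+\overline{X}_2$ is constant, and the derivative in $\overline{X}_1$ is $\sqrt{\phi_1}+\sqrt{\phi_2}\,(\overline{X}_2-\overline{X}_1)/(2\sqrt{\overline{X}_1\overline{X}_2})$.
\end{itemize}

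I expect the main obstacle to be the Case 3 sign check together with gluing the pieces across case boundaries. For the sign, I would use the ratio ordering $\overline{X}_1 \le (\phi_1/\phi_2)\overline{X}_2$ to bound the negative part. Specifically, the derivative is nonnegative once $2\sqrt{\phi_1 \overline{X}_1\overline{X}_2} \ge \sqrt{\phi_2}(\overline{X}_1-\overline{X}_2)$, which I will verify from $\overline{X}_1/\phi_1 \le \overline{X}_2/\phi_2$. For the gluing, the three case formulas agree at the boundaries, since $\xai{1}$ is continuous in $\tau$ away from the equal-ratio point. Piecewise monotonicity then yields $U_1(G,\tau)\le U_1(G,0)$, so Player 1 never gains from giving.
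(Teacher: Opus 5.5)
Parts (ii) and (iii) are fine. Part (iii) is the same substitution the paper makes. For (ii), which the paper only cites from Kovenock and Roberson, your observation that Player 2 already collects all of $\phi_2$ at $\tau=0$ in Case 1 is a complete, self-contained proof.

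The gap is in (i). Your argument covers only transfers along which the donor's ratio stays at most the ally's the whole way, i.e., the ``in particular'' clause and its index swap. The first clause also covers a donor who starts with the strictly \emph{larger} ratio and ends at or below the ally's. ``By symmetry'' does not reduce this to your setting, since swapping indices does not change which Player starts ahead. Piecewise monotonicity cannot handle it either: before the ratios cross, the donor's payoff can strictly increase (this is exactly the mutually beneficial transfer of Lemma~\ref{lem:nominal}), so no monotone-path argument compares the endpoints. This clause is genuinely needed, since Lemma~\ref{lem:recipient} uses it to conclude that a donor who gains still has the larger ratio after the transfer.

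The missing step is a comparison through the equal-ratio point. Say Player 1 gives with $X_1/\phi_1>X_2/\phi_2$. Let $\tau^*$ be the transfer that equalizes the ratios, and let $\overline{X}_1^*=\phi_1(X_1+X_2)/(\phi_1+\phi_2)$. Your monotonicity argument on $(\tau^*,\tau]$ gives $U_1(G,\tau)\le m$, where $m$ is the limit as $\tau'\downarrow\tau^*$:
\begin{itemize}
\item $m=\phi_1\bigl(1-X_A/(2\overline{X}_1^*)\bigr)$ if $X_A\le\overline{X}_1^*$;
\item $m=\phi_1/2$ if $\overline{X}_1^*<X_A\le X_1+X_2$;
\item $m=\phi_1(X_1+X_2)/(2X_A)$ otherwise.
\end{itemize}
It remains to show $U_1(G,0)\ge m$ using the index-swapped Cases at $\tau=0$, with $Y'=(\phi_2X_1X_2/\phi_1)^{1/2}>X_2$. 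In swapped Cases 1--2, $\xai{1}\le X_1$, so $U_1(G,0)\ge\phi_1\max\{1/2,\,1-X_A/(2X_1)\}$. This dominates every branch of $m$; the first branch needs $\overline{X}_1^*<X_1$. In swapped Case 3, $X_A>X_1+Y'>X_1+X_2$, and $U_1(G,0)=\bigl(\phi_1X_1+(\phi_1\phi_2X_1X_2)^{1/2}\bigr)/(2X_A)\ge\phi_1(X_1+X_2)/(2X_A)$ because $\phi_2X_1\ge\phi_1X_2$.

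The same fact also settles the case you set aside: $m$ is Player 1's smallest payoff over all tie-breaking selections at equal ratios. So it covers the case where \eqref{eq:ratio-assumption} holds with equality and $U_1(G,0)$ is itself a Case 4 selection. The paper relies on that case to deduce that \eqref{eq:ratio-assumption} is strict.

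If a transfer ends \emph{exactly} at equal ratios, $U_1(G,\tau)$ is a selection and can exceed $U_1(G,0)$. For example, take $\phi_1=\phi_2=1$, $X_1=2$, $X_2=1$, $X_A=1.5$, $\tau=0.5$, with the Adversary selecting $\xai{1}=0$. So only the strict version of (i) is selection-free.

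Finally, your Case 2 formula uses the wrong branch of \eqref{eq:player_payoff}. There $\xai{1}=(\phi_1\oxone\oxtwo/\phi_2)^{1/2}\ge\oxone$, so $U_1=\tfrac12(\phi_1\phi_2\oxone/\oxtwo)^{1/2}$. The direction of monotonicity survives, but with your formula the boundary agreement you invoke for gluing would be false.
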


\begin{proof}
    Parts (i) and (ii) are shown in the proofs of Propositions 1 and 2 of \cite{kovenock2012coalitional}. For (iii), substitute $X_A = X_2 + Y$ into \eqref{eq:nominal-case-2}. Since $Y (\phi_2 X_1/(\phi_1 X_2))^{1/2} = X_1$, the right-hand side times $X_A$ becomes $2X_1 + 2(\phi_2 X_1 X_2/\phi_1)^{1/2}$, so \eqref{eq:nominal-case-2} rearranges to $X_2 - X_1 > 2(\phi_2 X_1 X_2/\phi_1)^{1/2}$, which is \eqref{eq:nominal-case-3} multiplied by $X_2$.
\end{proof}

We refer to \eqref{eq:nominal-case-2} and \eqref{eq:nominal-case-3} as the \emph{Case 2 condition} and the \emph{Case 3 condition}. Both express that Player 2's payoff increases when they start giving resources. Specifically, writing $s = -\tau > 0$ for the amount Player 2 gives, so that $\oxone = X_1 + s$ and $\oxtwo = X_2 - s$, the derivatives of Player 2's payoff with respect to $s$ in Case 2 and Case 3 are \cite{kovenock2012coalitional}
\begin{align*}
    D_2 &= -\frac{\phi_2 X_A}{2 \oxtwo^2} + \frac{(\phi_1 \phi_2)^{1/2} (\oxone + \oxtwo)}{4\, \oxone^{1/2}\, \oxtwo^{3/2}}, \\
    D_3 &= \frac{1}{2X_A}\left( -\phi_2 + \frac{(\phi_1 \phi_2)^{1/2} (\oxtwo - \oxone)}{2 (\oxone \oxtwo)^{1/2}} \right),
\end{align*}
and rearranging shows that, at $s = 0$, $D_2 > 0$ is exactly the Case 2 condition and $D_3 > 0$ is exactly the Case 3 condition. Lemma~\ref{lem:nominal-extra}(iii) says that when a game crosses from Case 2 into Case 3, the condition governing the alliance changes its form but not its truth value. This is what allows us to handle uncertainty that spans both Cases.

\begin{lemma}\label{lem:recipient}
    In any game, if a transfer strictly increases the payoff of the Player who gives resources, then it also strictly increases the payoff of the Player who receives them.
\end{lemma}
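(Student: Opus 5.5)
The plan is a direct case analysis on the explicit Case formulas from Appendix~\ref{app:cases}. A soft argument based only on the Adversary's optimality does not seem to suffice. Revealed preference shows that a strict gain for the giver forces the Adversary to move budget \emph{toward} the receiver's game. That pushes the receiver's payoff down, while the extra budget pushes it up, so the two effects have to be compared quantitatively.

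\textbf{Reduction.} By Lemma~\ref{lem:nominal-extra}(i), a giver can strictly gain only if their post-transfer budget-to-valuation ratio is strictly larger than the receiver's. Relabel so that the giver is Player~2 and the receiver is Player~1. Write the transfer as $s=-\tau>0$, so that $\oxone = X_1+s$, $\oxtwo = X_2-s$ and $\oxone/\pq < \oxtwo/\pw$. Then the post-transfer game lies in $\mathcal{G}_{1\le 2}$, and the Adversary's response is given by Cases~1--3. Case~4 is excluded by the strict ratio inequality; at the pre-transfer point $s=0$ with equal ratios, we use the fixed tie-breaking selection and continuity. Writing $\overline{Y}\triangleq(\pq\oxone\oxtwo/\pw)^{1/2}$, note that $\oxone/\pq\le\oxtwo/\pw$ gives $\overline{Y}\ge\oxone$. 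Hence in Cases~2 and~3 the Adversary's allocation to game~1 weakly exceeds $\oxone$, and Player~1's payoff is $\pq\oxone/(2\xai{1})$.

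\textbf{Receiver's payoff in each case.} I would compute Player~1's post-transfer payoff in each case.
\begin{itemize}
\item In Case~1, the Adversary sends everything to game~1, so the giver wins $\pw$ both with and without the transfer. Equivalently, by Lemma~\ref{lem:nominal-extra}(ii), the giver cannot strictly gain. This case therefore never triggers the hypothesis along the relevant path.
\item In Case~2, $\xai{1}=\overline{Y}$, so
\[
U_1=\tfrac12\left(\frac{\pq\pw\oxone}{\oxtwo}\right)^{1/2}.
\]
This is strictly increasing in $s$, since $\oxone$ rises and $\oxtwo$ falls. So within Case~2 the receiver gains for every $s>0$, with no hypothesis needed.
\item In Case~3, $\xai{1}=X_A(\pq\oxone)^{1/2}/((\pq\oxone)^{1/2}+(\pw\oxtwo)^{1/2})$, which gives
\[
U_1=\frac{\pq\oxone+(\pq\pw\oxone\oxtwo)^{1/2}}{2X_A}.
\]
Its $s$-derivative is
\[
\frac{1}{2X_A}\left(\pq+\frac{(\pq\pw)^{1/2}(\oxtwo-\oxone)}{2(\oxone\oxtwo)^{1/2}}\right).
\]
This is positive whenever $\oxtwo\ge\oxone$. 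I would get $\oxtwo>\oxone$ from the giver's side: the derivative $D_3$ of Player~2's payoff is positive exactly when $(\pq\pw)^{1/2}(\oxtwo-\oxone)>2\pw(\oxone\oxtwo)^{1/2}>0$.
\end{itemize}

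\textbf{Stitching along the path.} As $s$ increases from $0$, the game may move between Cases. The plan is to use continuity of $U_1$ in $s$ (the payoffs agree on the Case boundaries) together with piecewise monotonicity, so that $U_1(G,\tau)-U_1(G,0)$ is a sum of nonnegative increments, at least one of them strict. The main obstacle is the Case~3 segments. There I need $\oxtwo>\oxone$ along the whole segment, not just where $D_3>0$, because the giver's gain $U_2(G,\tau)>U_2(G,0)$ is a statement about endpoints.

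To handle this I would argue as follows. Since $s$ moves budget from Player~2 to Player~1, $\oxtwo-\oxone$ is decreasing in $s$, so it suffices to show $\oxtwo>\oxone$ at the final transfer. Suppose instead $\oxtwo\le\oxone$ there. In Case~3 this would make $D_3<0$ on the tail of the path. I would then compare with Player~2's payoff formula $U_2=\pw-\pw X_A(\pw\oxtwo)^{1/2}/\bigl(2\oxtwo((\pq\oxone)^{1/2}+(\pw\oxtwo)^{1/2})\bigr)$ directly, and try to show this contradicts the giver's strict gain. If that direct contradiction fails, I would fall back on the ratio constraint $\oxone/\pq<\oxtwo/\pw$ to bound $\oxone$ relative to $\oxtwo$ on the relevant range. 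This would show the Case~3 derivative of $U_1$ stays positive throughout.
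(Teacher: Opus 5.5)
Your skeleton matches the paper's: the giver's gain forces a strict ratio inequality, the receiver's payoff is monotone in $s$ within each Case, and continuity stitches the Cases together. However, the one non-trivial step, Case~3, rests on an inequality that is false. You want $\oxtwo > \oxone$ at the final transfer and hope to derive it from the giver's strict gain. But when $\pq > \pw$, the giver can strictly gain while ending with $\oxtwo < \oxone$. Take $\pq = 100$, $\pw = 1$, $X_1 = 0.01$, $X_2 = 10$, $X_A = 1000$. The whole path stays in Case~3, and Player~2's payoff $\bigl(\pw\oxtwo + (\pq\pw\oxone\oxtwo)^{1/2}\bigr)/(2X_A)$ rises from about $13.2/2000$ at $s = 0$ to about $31.0/2000$ at $s = 9$, where $\oxtwo = 1 < \oxone = 9.01$. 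So the contradiction you are looking for does not exist.

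What you list as a fallback is exactly what the paper does, and it needs nothing beyond the ratio inequality. With $t = (\oxtwo/\oxone)^{1/2}$ you have $t > (\pw/\pq)^{1/2}$. Positivity of your bracket is equivalent to $t^2 + 2(\pq/\pw)^{1/2} t - 1 > 0$, whose left side exceeds $\pw/\pq + 1$. The paper records the bound that the derivative is at least $\pq/(4X_A)$. You also need the ratio inequality on all of $[0, s]$, not just at the endpoint. It propagates because $\oxtwo/\pw - \oxone/\pq$ is decreasing in $s$, which also makes your tie-breaking remark about $s = 0$ unnecessary.

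Your treatment of Case~1 is also wrong. A transfer can move a game that starts outside Case~1 into Case~1, since $\oxone\oxtwo$ can increase with $s$. The giver then wins the full $\pw$, which is a strict gain, not ``the same with and without the transfer.'' Lemma~\ref{lem:nominal-extra}(ii) concerns games that \emph{start} in Case~1 and says nothing about paths that enter it. For instance, the running example with $X_A$ lowered to $0.8$ starts in Case~2. At $s = 0.4$ one has $(\pq\oxone\oxtwo/\pw)^{1/2} \approx 0.876 > X_A$, and Player~2's payoff rises from about $9.83$ to $10$. So Case~1 segments must be part of your stitching. Fortunately the receiver's payoff there is $U_P^{\rm NE}(\pq, \oxone, X_A)$, which is strictly increasing in $\oxone$, exactly as in the paper's proof.
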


\begin{proof}
    Let Player 2 be the one giving, so $\tau < 0$, and write $s = -\tau$. By Lemma~\ref{lem:nominal-extra}(i), if Player 2 gains, then after the transfer $\oxtwo/\phi_2 > \oxone/\phi_1$. Giving resources makes Player 2's ratio steadily smaller and Player 1's steadily larger, so the same strict inequality holds after every smaller transfer $s' \in [0, s]$. For each such $s'$, the Adversary's best response is in Case 1, 2, or 3, and Player 1's payoff is continuous in $s'$ and strictly increasing within each Case:
    \begin{itemize}[leftmargin=*, topsep=2pt, itemsep=1pt]
        \item In Case 1, the Adversary sends its whole budget to game 1, and Player 1's payoff \eqref{eq:player_payoff} increases with their own budget.
        \item In Case 2, Player 1's payoff is $\frac{1}{2}(\phi_1 \phi_2 \oxone / \oxtwo)^{1/2}$ \cite{kovenock2012coalitional}, which increases as $\oxone$ grows and $\oxtwo$ shrinks.
        \item In Case 3, the derivative of Player 1's payoff with respect to $s'$ is
        \begin{equation*}
            \frac{1}{2X_A}\left( \phi_1 + \frac{(\phi_1 \phi_2)^{1/2}(\oxtwo - \oxone)}{2(\oxone \oxtwo)^{1/2}} \right) \ge \frac{\phi_1}{4X_A} > 0,
        \end{equation*}
        where the inequality follows from $\oxone/\phi_1 < \oxtwo/\phi_2$.
    \end{itemize}
    Hence, Player 1's payoff strictly increases. The case in which Player 1 gives is identical with the indices swapped.
\end{proof}

\subsection{Proofs for Adversary Uncertainty}\label{app:adversary}

\begin{proof}[Proof of Theorem~\ref{thm:adversary-uncertainty}]
    Since $G$ has a mutually beneficial transfer, \eqref{eq:ratio-assumption} holds strictly by Lemma~\ref{lem:nominal-extra}(i), and this ordering does not depend on the Adversary's budget. Throughout, $X_A' \in [X_A(1 - \delta), X_A(1 + \delta)]$ denotes a possible Adversary budget, i.e., one of the games in ${\cal G}^{X_A}(G, \delta)$.

    \emph{Step 1 (one Adversary budget at a time).} A $\delta$-robust transfer is mutually beneficial in every game in ${\cal G}^{X_A}(G, \delta)$, so each of these games must have a mutually beneficial transfer. Conversely, suppose each of them has one. Then Player 2's marginal payoff at $\tau = 0$ is positive in each of these games (by Lemma~\ref{lem:nominal}, and by Lemma~\ref{lem:nominal-extra}(iii) on the Case 2/3 boundary). Since it varies continuously with $X_A'$ and $\tau$, it stays above some positive level over the closed interval of possible Adversary budgets for all small transfers, so every $\tau < 0$ with $|\tau|$ small enough benefits Player 2, and by Lemma~\ref{lem:recipient} also Player 1, in every game in ${\cal G}^{X_A}(G, \delta)$. Hence, it suffices to determine when every game in ${\cal G}^{X_A}(G, \delta)$ has a mutually beneficial transfer.

    \emph{Step 2 (how each requirement depends on $X_A'$).} By Lemmas~\ref{lem:nominal} and~\ref{lem:nominal-extra}(ii), the game with Adversary budget $X_A'$ must avoid Case 1, i.e., $X_A' > Y$, and must satisfy the condition of its Case.
    \begin{itemize}[leftmargin=*, topsep=2pt, itemsep=1pt]
        \item $Y$ does not depend on the Adversary's budget, so avoiding Case 1 is hardest for the weakest Adversary. It holds for every $X_A'$ if and only if $X_A(1 - \delta) > Y$, which is exactly $\delta < \alpha_1(G)$.
        \item The Case 2 condition \eqref{eq:nominal-case-2} becomes harder as the Adversary's budget grows, so it holds at budget $X_A'$ if and only if $X_A' < X_A^{\rm F}$, where $X_A^{\rm F} \triangleq X_A(1 + \alpha_2(G))$ is the budget at which it becomes tight. In particular, $X_A(1 + \delta) < X_A^{\rm F}$ is exactly $\delta < \alpha_2(G)$.
        \item The Case 3 condition \eqref{eq:nominal-case-3} does not depend on the Adversary's budget.
    \end{itemize}

    \emph{Step 3 (the boundary between Cases 2 and 3).} Budgets $X_A' \le X_2 + Y$ lie in Case 2, and budgets $X_A' > X_2 + Y$ lie in Case 3. By Lemma~\ref{lem:nominal-extra}(iii), the Case 2 condition holds on the boundary $X_A' = X_2 + Y$ if and only if the Case 3 condition holds; equivalently, $X_A^{\rm F} > X_2 + Y$ if and only if the Case 3 condition holds. Note that the Case 3 condition is exactly the condition $X_2 - X_1 > 2(\phi_2 X_1 X_2/\phi_1)^{1/2}$ in \eqref{eq:adversary-uncertainty-cond-1}.
    \begin{itemize}[leftmargin=*, topsep=2pt, itemsep=1pt]
        \item If the Case 3 condition holds (first line of \eqref{eq:adversary-uncertainty-cond-1}), then every budget in Case 2 satisfies $X_A' \le X_2 + Y < X_A^{\rm F}$, and every budget in Case 3 satisfies the Case 3 condition. The only remaining requirement is $\delta < \alpha_1(G)$.
        \item If the Case 3 condition fails (second line), then no possible budget may lie in Case 3, and every budget in Case 2 must satisfy $X_A' < X_A^{\rm F} \le X_2 + Y$. Both hold if and only if $X_A(1 + \delta) < X_A^{\rm F}$, i.e., $\delta < \alpha_2(G)$, which together with $\delta < \alpha_1(G)$ gives the second line. \qedhere
    \end{itemize}
\end{proof}

\subsection{Proofs for Ally Uncertainty}\label{app:ally}

\begin{lemma}\label{lem:donor}
    \begin{enumerate}[label=(\roman*)]
        \item No transfer $\tau > 0$ is $\delta$-robust to ally uncertainty.
        \item A transfer $\tau < 0$ benefits Player 2 in every game in ${\cal G}^{X_1}(G, \delta, \tau)$ only if $\delta < \beta_1(G)$.
        \item Given $\delta < \beta_1(G)$, a transfer $\tau < 0$ is $\delta$-robust to ally uncertainty if and only if it benefits Player 2 in every game in the set ${\cal G}^{X_1}(G, \delta)$.
    \end{enumerate}
\end{lemma}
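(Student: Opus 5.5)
The proof treats the three parts separately. Throughout it relies on one bookkeeping fact: which constraint $\ell_i(G,\tau)$ is active depends only on the sign of $(-1)^i\tau$.

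\emph{Part (i).} Let $\tau>0$, so Player 1 gives. For Player 1 we have $(-1)^1\tau=-\tau\le 0$, so $\ell_1(G,\tau)=(1-\delta)X_2$. Hence ${\cal G}^{X_2}(G,\delta,\tau)={\cal G}^{X_2}(G,\delta)$, and this set contains $G$ itself (take $X_2'=X_2$). By \eqref{eq:ratio-assumption} and Lemma~\ref{lem:nominal-extra}(i), Player 1 never gains from giving resources in $G$, so $U_1(G,\tau)\le U_1(G,0)$. Thus $\tau$ fails robustness at the realization $G$.

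\emph{Part (ii).} Let $\tau<0$ and write $s=-\tau>0$. For Player 2, $(-1)^2\tau=\tau\le 0$, so again $\ell_2=(1-\delta)X_1$. Player 2's set is therefore the full interval $X_1'\in[(1-\delta)X_1,(1+\delta)X_1]$, and in particular it contains the realization $X_1'=(1+\delta)X_1$. Suppose Player 2 gains there. By Lemma~\ref{lem:nominal-extra}(i), the post-transfer ratios must satisfy
\[
\frac{X_2-s}{\phi_2}>\frac{(1+\delta)X_1+s}{\phi_1}.
\]
Since $s>0$, this forces $X_2/\phi_2>(1+\delta)X_1/\phi_1$, which rearranges exactly to $\delta<\beta_1(G)$.

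\emph{Part (iii).} Assume $\delta<\beta_1(G)$. By part (ii) the ``only if'' direction is immediate: Player 2's condition in the definition is exactly benefit over ${\cal G}^{X_1}(G,\delta,\tau)={\cal G}^{X_1}(G,\delta)$. For the converse, assume Player 2 benefits on all of ${\cal G}^{X_1}(G,\delta)$; it remains to verify Player 1's condition.

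For Player 1, $(-1)^1\tau=s>0$, so Player 1's set consists of the $X_2'\in[(1-\delta)X_2,(1+\delta)X_2]$ with $(X_2'-s)/\phi_2\ge(X_1+s)/\phi_1$. I will not invoke Lemma~\ref{lem:recipient} directly, since Player 2 need not gain in these perturbed games. Instead I reuse its argument. Fix such an $X_2'$. For every $s'\in[0,s)$ the strict ordering $(X_2'-s')/\phi_2>(X_1+s')/\phi_1$ holds. In Cases 1--3 Player 1's payoff is continuous in $s'$ and strictly increasing in $s'$, as computed in the proof of Lemma~\ref{lem:recipient}. Hence $U_1$ strictly increases on $[0,s)$, and continuity up to $s'=s$ gives $U_1(G_1,\tau)>U_1(G_1,0)$.

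The hypothesis $\delta<\beta_1(G)$ is used here in two places:
\begin{itemize}
\item it keeps the nominal game $G$ in Player 2's favorable ordering, so that $G$ (which lies in both sets) has $\tau$ mutually beneficial;
\item it ensures that $\tau$ is mutually beneficial at $G$ itself once both robustness conditions hold.
\end{itemize}

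\emph{Expected obstacle.} The delicate step is the endpoint of Player 1's set where equality $(X_2'-s)/\phi_2=(X_1+s)/\phi_1$ holds. There the Adversary's response is the non-unique Case~4 tie, fixed only by the convention in the footnote. I would handle this in one of two ways:
\begin{itemize}
\item argue via continuity from the left in $s'$, if the fixed selection makes $U_1$ left-continuous at $s$; otherwise
\item observe that the equality realization can be excluded or absorbed, because Player 2's strict-gain condition (Lemma~\ref{lem:nominal-extra}(i)) already rules out the tie in the true game.
\end{itemize}
I would also check that when Player 1's set is empty the condition holds vacuously. This cannot actually happen, since $G$ belongs to the set whenever Player 2 gains in $G$.
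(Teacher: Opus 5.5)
Your proof follows the paper's argument part by part. In (i) the donor's uncertainty set is unrefined and contains $G$. In (ii) you test the realization $X_1'=(1+\delta)X_1$ via Lemma~\ref{lem:nominal-extra}(i), phrased through the post-transfer ratios, which is the more literal reading of that lemma. In (iii) you rerun the proof of Lemma~\ref{lem:recipient} on Player 1's refined set instead of citing the lemma, exactly as the paper does. As a small aside, $\delta<\beta_1(G)$ plays no role in the converse of (iii): mutual benefit at $G$ already follows from $G\in{\cal G}^{X_1}(G,\delta)$ and Lemma~\ref{lem:recipient}.

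The endpoint you flag is a genuine subtlety, and the paper's proof passes over it when it asserts that Player 2's contests are strictly the harder target in every game of Player 1's set. Because the refinement uses $\ge$, the tie realization $X_2'=s+\frac{\phi_2}{\phi_1}(X_1+s)$ can belong to ${\cal G}^{X_2}(G,\delta,\tau)$.

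Your second fallback fails, because Player 2's strict gain only excludes a tie at $X_2'=X_2$. For instance, take $\phi_1=\phi_2=1$, $X_1=1$, $X_2=2$, $X_A=1.9$, $\delta=0.6$ and small $s$. Then Player 2 gains on all of ${\cal G}^{X_1}(G,\delta)$, yet $X_2'=1+2s$ lies in Player 1's set and puts the game in the non-unique Case 4.

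Your first fallback can be completed without any assumption on the selection. You only need $U_1$ at the tie to be at least its left limit.
\begin{itemize}
\item If $X_A>\overline{X}_1+\overline{X}_2'$ at the tie, the game is in Case 3 there, where the response is unique and continuous.
\item Otherwise, as $s'\uparrow s$ the Case 1 or Case 2 allocation to game 1, namely $X_A$ or $(\phi_1\overline{X}_1\overline{X}_2'/\phi_2)^{1/2}$, tends to $\min\{X_A,\overline{X}_1\}$. This is the largest allocation to game 1 that Case 4 permits, and hence the worst one for Player 1.
\end{itemize}
So every selection gives $U_1(G_1,\tau)\ge\lim_{s'\uparrow s}U_1>U_1(G_1,0)$.
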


\begin{proof}
    \emph{(i) Player 1 cannot be the donor.} If $\tau > 0$, Player 1 gives resources, and robustness requires Player 1 to gain in every game in their uncertainty set, which contains the true game $G$. In $G$, Player 1's ratio is at most Player 2's by \eqref{eq:ratio-assumption}, and giving away resources only lowers it further, so Player 1 cannot gain by Lemma~\ref{lem:nominal-extra}(i).

    \emph{(ii) Player 2 must be the harder target in every game they consider possible.} If $\tau < 0$, Player 2 gives resources and their uncertainty set is not refined, so robustness requires Player 2 to gain in every game in ${\cal G}^{X_1}(G, \delta)$, in particular in the one where Player 1's budget is largest, $X_1(1+\delta)$. If in that game $X_2/\phi_2 \le X_1(1+\delta)/\phi_1$, Player 2 cannot gain by Lemma~\ref{lem:nominal-extra}(i). Hence $X_2/\phi_2 > X_1(1+\delta)/\phi_1$ is necessary, and this is exactly $\delta < \beta_1(G)$.

    \emph{(iii) Player 1's condition holds automatically.} For $\tau < 0$, robustness asks Player 1 to gain in every game in ${\cal G}^{X_2}(G, \delta, \tau)$. In each of these games, Player 2's contests are strictly the harder target after the transfer, so by the argument in the proof of Lemma~\ref{lem:recipient}, Player 1's payoff strictly increases. Robustness therefore reduces to Player 2's condition: the transfer must benefit Player 2 in every game in ${\cal G}^{X_1}(G, \delta)$, which by Lemma~\ref{lem:recipient} is the same as being mutually beneficial in each of these games.
\end{proof}

Next, we show that if a mutually beneficial transfer exists when Player 1's budget is $X$, then one also exists for every smaller budget of Player 1. Only Player 1's budget varies, and we write $Y(x) \triangleq (\phi_1 x X_2/\phi_2)^{1/2}$ for the Case 1 threshold when Player 1's budget is $x$.

\begin{lemma}\label{lem:monotone}
    If the game $G|_{X_1 \mapsto X}$ has a mutually beneficial transfer, then so does $G|_{X_1 \mapsto X'}$ for every $X' < X$.
\end{lemma}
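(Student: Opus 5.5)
We argue by cases, using Lemma~\ref{lem:nominal} and Lemma~\ref{lem:nominal-extra} with $X_2$, $X_A$ and the valuations held fixed and only Player 1's budget $x$ varying. Write $G_x \triangleq G|_{X_1 \mapsto x}$.

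\emph{Setup.} Suppose $G_X$ has a mutually beneficial transfer. By Lemma~\ref{lem:nominal-extra}(i), the ratio assumption must hold strictly: $X_2/\phi_2 > X/\phi_1$. Hence it also holds strictly for every $x < X$. By Lemma~\ref{lem:nominal-extra}(ii), $G_X$ is not in Case 1, so $X_A > Y(X)$. Since $Y(x)$ is increasing in $x$, we get $X_A > Y(x)$ for all $x < X$, and no $G_x$ lies in Case 1. Thus each $G_x$ with $x<X$ lies in Case 2 or Case 3. Its membership is decided by comparing $X_A$ with $X_2 + Y(x)$, and the relevant condition is \eqref{eq:nominal-case-2} or \eqref{eq:nominal-case-3}.

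\emph{Monotonicity of each condition.} The Case 3 condition is
\[
1 - \frac{x}{X_2} > 2\left(\frac{\phi_2 x}{\phi_1 X_2}\right)^{1/2}.
\]
Its left side decreases in $x$ and its right side increases, so the condition is preserved as $x$ decreases. For the Case 2 condition, write $t = (x/X_2)^{1/2}$. Then \eqref{eq:nominal-case-2} reads
\[
X_2(1+t^2) > 2 X_A (\phi_2/\phi_1)^{1/2}\, t.
\]
This is not obviously monotone in $t$, since $1+t^2$ also shrinks as $t$ shrinks. This is the main obstacle. We handle it by restricting to the range where the game actually sits in Case 2, namely $X_A \le X_2 + Y(x) = X_2\bigl(1 + t\,(\phi_1/\phi_2)^{1/2}\bigr)$. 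Consider $h(t) = X_2(1+t^2)/t - 2X_A(\phi_2/\phi_1)^{1/2}$. It is decreasing for $t<1$, and $t<1$ holds because $x < X_2\phi_1/\phi_2$ is not quite $t<1$. So rather than rely on $h$ being decreasing, we argue via the boundary.

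\emph{Transition argument.} As $x$ decreases from $X$, the game can move only from Case 2 into Case 3, never back, because $X_2 + Y(x)$ decreases. We would then run the following steps.
\begin{enumerate}[label=(\alph*)]
\item If $G_X$ is in Case 3, every smaller $x$ stays in Case 3, and the Case 3 condition persists by the monotonicity above.
\item If $G_X$ is in Case 2, we claim the Case 2 condition persists while $G_x$ remains in Case 2. Note that $t \mapsto (1+t^2)/t$ is decreasing on $(0,1)$. We need $t<1$, i.e.\ $x<X_2$. For $x \ge X_2$, the left side of the Case 2 condition is $(x+X_2)/X_A$, with $X_A$ large relative to $Y$; in that regime we would check directly that the condition fails, so mutually beneficial transfers force $x<X_2$. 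A possible shortcut for this check is that the condition's truth value equals the sign of $D_2$ at $s=0$. If this direct check fails, we fall back on showing via the derivative $D_2$ at $s=0$ that its sign is monotone in $\overline{X}_1$ over the Case 2 region.
\item Once $x$ crosses to the boundary $X_A = X_2 + Y(x)$, Lemma~\ref{lem:nominal-extra}(iii) converts the Case 2 condition into the Case 3 condition, and step (a) continues the argument.
\end{enumerate}
Combining these with Lemma~\ref{lem:nominal} gives a mutually beneficial transfer for each $x < X$.
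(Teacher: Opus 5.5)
Your structure is the same as the paper's. You rule out Case 1 because $Y(x)$ increases in $x$, and observe that lowering $x$ can only move the game from Case 2 into Case 3. You then use that the Case 3 condition is monotone in $x$, that the Case 2 condition is monotone for $x<X_2$, and pass through the Case 2/3 boundary with Lemma~\ref{lem:nominal-extra}(iii).

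However, step (b) is left open exactly where the argument needs something. You correctly see that $(1+t^2)/t$ is decreasing only for $t<1$, i.e.\ $x<X_2$, and that the ratio assumption $x<X_2\phi_1/\phi_2$ does not give this. But you only say you ``would check directly'' that the Case 2 condition fails for $x\ge X_2$, and you offer a fallback via $D_2$. That fallback does not sidestep anything. At $s=0$, $D_2$ depends on $x$ only through $(x+X_2)/x^{1/2}=X_2^{1/2}(1+t^2)/t$, so it carries the same monotonicity restriction. As written, the Case 2 branch is unproven.

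The missing step is one line: combine the Case 2 condition at $X$ with $X_A>Y(X)$, which you already derived in your setup. In your variables $Y(X)=X_2(\phi_1/\phi_2)^{1/2}\,t$, so
\[
X_2(1+t^2) \;>\; 2X_A(\phi_2/\phi_1)^{1/2}\, t \;>\; 2X_2\, t^2 ,
\]
which gives $t<1$, i.e.\ $X<X_2$. The paper writes this as $X+X_2>2X$. Hence every $x\le X$ has $t<1$. The Case 2 inequality then holds as a formula at every $x\le X$, not only while $G_x$ stays in Case 2. In particular it holds at the boundary budget $x_b$, where $X_A=X_2+Y(x_b)$ and the game is still in Case 2. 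Your step (c) then converts it to the Case 3 condition via Lemma~\ref{lem:nominal-extra}(iii) and finishes by Case 3 monotonicity, exactly as in the paper. With this line inserted, your argument is complete.
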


\begin{proof}
    Both $Y(x)$ and the Case 2/3 boundary $Y(x) + X_2$ increase with $x$. Lowering Player 1's budget from $X$ to $X'$ therefore never moves the game into Case 1 and can only move it from Case 2 into Case 3. We show that the condition of the Case in which $X'$ lies holds.

    \emph{If $X$ lies in Case 3,} then so does $X'$. Multiplying the Case 3 condition by $(X_2/x)^{1/2}$, it reads $(X_2 - x)/(x X_2)^{1/2} > 2(\phi_2/\phi_1)^{1/2}$, whose left-hand side decreases in $x$, so it holds at $X'$.

    \emph{If $X$ lies in Case 2,} consider Player 2's marginal payoff $D_2$ at $\tau = 0$ as a function of Player 1's budget $x$. Its first term does not depend on $x$, and its second depends on $x$ through $(x + X_2)/x^{1/2}$, which decreases in $x$ for $x < X_2$. This range applies: substituting $X_A > Y(X)$ into the Case 2 condition gives $X + X_2 > 2X$. Hence $D_2 > 0$, i.e., the Case 2 condition holds, at every budget between $X'$ and $X$. If $X'$ lies in Case 2, we are done. If $X'$ lies in Case 3, let $x_b$ be the budget between $X'$ and $X$ at which the game is on the Case 2/3 boundary. The Case 2 condition holds at $x_b$, so the Case 3 condition holds there by Lemma~\ref{lem:nominal-extra}(iii), and by the previous paragraph it also holds at $X' < x_b$.
\end{proof}

\begin{proof}[Proof of Theorem~\ref{thm:ally-uncertainty}]
    By Lemma~\ref{lem:donor}, $\delta < \beta_1(G)$ is necessary, and given it, a robust transfer exists if and only if a single $\tau < 0$ is mutually beneficial in every game in ${\cal G}^{X_1}(G, \delta)$, i.e., for every budget of Player 1 in $[X_1(1 - \delta), X_1(1 + \delta)]$. Let $X_{\max} \triangleq X_1(1 + \delta)$ be the largest budget of Player 1 that Player 2 considers possible.

    \emph{Step 1 (only the strongest Player 1 matters).} By Lemma~\ref{lem:monotone}, every game in ${\cal G}^{X_1}(G, \delta)$ has a mutually beneficial transfer if and only if $G|_{X_1 \mapsto X_{\max}}$ does, and by the continuity argument in the proof of Theorem~\ref{thm:adversary-uncertainty}, a single sufficiently small $\tau < 0$ then works in all of them. Hence, given $\delta < \beta_1(G)$, a robust transfer exists if and only if $G|_{X_1 \mapsto X_{\max}}$ has a mutually beneficial transfer.

    \emph{Step 2 (the requirements at $X_{\max}$).} By Lemmas~\ref{lem:nominal} and~\ref{lem:nominal-extra}(ii), this game has a mutually beneficial transfer if and only if the following hold.
    \begin{itemize}[leftmargin=*, topsep=2pt, itemsep=1pt]
        \item The game avoids Case 1: $X_A > Y(X_{\max})$. Squaring, this is exactly $\delta < \beta_2(G)$.
        \item If the game lies in Case 3, the Case 3 condition holds. Let $\kappa_0 \triangleq (\beta_5(G) + 1)/(\beta_5(G) - 1)$; as in the proof of Corollary~\ref{cor:ally-uncertainty}, $\kappa_0^{1/2} - \kappa_0^{-1/2} = 2(\phi_2/\phi_1)^{1/2}$. With $u = X_2/X_{\max}$, the Case 3 condition reads $u^{1/2} - u^{-1/2} > 2(\phi_2/\phi_1)^{1/2}$, and since the left-hand side increases in $u$, it holds if and only if $u > \kappa_0$, i.e., $\delta < \beta_6(G)$.
        \item If the game lies in Case 2, the Case 2 condition holds. With $w = (1 + \delta)^{1/2}$, so that $X_{\max} = X_1 w^2$, dividing the Case 2 condition by $X_1$ gives $w^2 - 2(\beta_2(G) + 1)^{1/2} w + X_2/X_1 > 0$, a quadratic in $w$. If $\beta_2(G) + 1 < X_2/X_1$, i.e., $X_A < X_2(\phi_1/\phi_2)^{1/2}$, it has no real roots and always holds. Otherwise, its smaller root is $\beta_3(G)$, and avoiding Case 1 reads $w < (\beta_2(G) + 1)^{1/2}$, the midpoint of the two roots. So, among games that avoid Case 1, the Case 2 condition holds if and only if $w < \beta_3(G)$, i.e., $\delta < \beta_4(G)$.
    \end{itemize}

    \emph{Step 3 (which Case applies).} Consider the budget $X_2/\kappa_0$ of Player 1, at which the Case 3 condition becomes tight. Since $\kappa_0 = (\beta_5(G) + 1)^2 \phi_2/\phi_1$ and $\phi_1/\phi_2 = \beta_5(G)^2 - 1$, the Case 2/3 boundary for this budget lies at the Adversary budget
    \begin{equation*}
        Y(X_2/\kappa_0) + X_2 = \frac{\phi_1 X_2}{\phi_2 (\beta_5(G) + 1)} + X_2 = \beta_5(G) X_2.
    \end{equation*}
    \begin{itemize}[leftmargin=*, topsep=2pt, itemsep=1pt]
        \item If $X_A > \beta_5(G) X_2$, the game with Player 1's budget $X_2/\kappa_0$ lies in Case 3 with its Case 3 condition tight, so it has no mutually beneficial transfer, and by Lemma~\ref{lem:monotone} neither does any game with a larger budget of Player 1. Every smaller budget places the game in Case 3 with the Case 3 condition satisfied, and also avoids Case 1. Hence, a robust transfer exists if and only if $\delta < \beta_6(G)$, which is the second line of \eqref{eq:ally-uncertainty-cond-1}; here $\beta_6(G) < \beta_2(G)$, and $\beta_6(G) < \beta_1(G)$ since $\kappa_0 > \phi_2/\phi_1$.
    \item Otherwise, $X_A \le \beta_5(G) X_2$. The Case 2/3 boundary $Y(x) + X_2$ increases with Player 1's budget $x$ and equals $\beta_5(G) X_2 \ge X_A$ at $x = X_2/\kappa_0$, so every budget of Player 1 in Case 3 is below $X_2/\kappa_0$ and, by Step 2, satisfies the Case 3 condition. On the boundary, the Case 3 condition holds (or is tight), so by Lemma~\ref{lem:nominal-extra}(iii) the Case 2 condition does too; since it holds exactly below a single threshold (Step 2), it also holds at every budget in Case 3. Hence, whether $X_{\max}$ lies in Case 2 or Case 3, the game $G|_{X_1 \mapsto X_{\max}}$ has a mutually beneficial transfer if and only if $\delta < \beta_2(G)$ and $\delta < \beta_4(G)$, which together with $\delta < \beta_1(G)$ give the first line of \eqref{eq:ally-uncertainty-cond-1}.
    \end{itemize}
\end{proof}

\bibliographystyle{plain}
\bibliography{ref}

\end{document}